\newcommand{\DD}{\mathbb{D}}
\newcommand{\RR}{\mathbb{R}}
\newcommand{\ZZ}{\mathbb{Z}}
\newcommand{\BB}{\mathbb{B}}
\newcommand{\SSS}{\mathbb{S}}
\newcommand{\poly}{\text{poly}}
\newcommand{\eps}{\varepsilon}
\def\OPT{\text{OPT}}
\def\DEF#1{\textbf{\emph{#1}}}
\DeclareMathOperator{\cost}{cost}
\DeclareMathOperator{\opt}{opt}
\DeclareMathOperator{\Center}{center}
\newtheorem{theorem}{Theorem}
\newtheorem{lemma}[theorem]{Lemma}
\newtheorem{proposition}[theorem]{Proposition}
\title{Hardness of Minimum Barrier Shrinkage and\\ Minimum Installation Path}
\author{Sergio Cabello\thanks{Faculty of Mathematics and Physics, University of Ljubljana,
  Slovenia, and Institute of Mathematics, Physics and Mechanics,
  Slovenia.  Email: \texttt{sergio.cabello}\texttt{@fmf.uni-lj.si}.}\and
\'Eric Colin de Verdi\`ere\thanks{LIGM, CNRS, Univ Gustave Eiffel, ESIEE
  Paris, F-77454  Marne-la-Vall\'ee, France.  Email: \texttt{eric.colindeverdiere}\texttt{@u-pem.fr}.}}
\begin{document}
\maketitle

\begin{abstract}
	In the \textsc{Minimum Installation Path} problem, we are given a graph $G$ with edge weights $w(\cdot)$ 
	and two vertices $s,t$ of $G$.
	We want to assign a non-negative power $p\colon V\rightarrow
        \RR_{\geq 0}$ to the vertices of $G$,
	so that the activated edges $\{ uv\in E(G) \mid p(u)+p(v)\geq w(uv)\}$ contain some $s$-$t$-path,
	and minimize the sum of assigned powers.
	In the \textsc{Minimum Barrier Shrinkage} problem, we are given, in
        the plane, a family of disks
	and two points $x$ and $y$. The task is to shrink the disks, 
	each one possibly by a different amount, so that we can draw an $x$-$y$ curve that
	is disjoint from the interior of the shrunken disks, and the sum of the decreases in the radii
	is minimized. 
	
	We show that the \textsc{Minimum Installation Path} and the
        \textsc{Minimum Barrier Shrinkage} problems (or, more precisely,
        the natural decision problems associated with them) are weakly NP-hard.

\medskip	
	\noindent\textbf{Keywords:} installation path, activation network, barrier problem, NP-hardness
\end{abstract}

\section{Introduction}

Let $X$ be a subset of the plane, let $x$ and $y$ be points in $X$,
and let $\SSS$ be a family of shapes in the plane.
An \DEF{$x$-$y$ curve} is a curve in $\RR^2$ with endpoints $x$ and $y$.
We say that $\SSS$ \DEF{separates} $x$ and $y$ in $X$ if each $x$-$y$ curve contained
in $X$ intersects some shape from $\SSS$.
Let $D(c,r)$ denote the \emph{open} disk centered at $c$ with radius $r$.

In this work we show that the following two decision problems are weakly NP-hard.
This means that in our reduction we will use numbers that are exponentially large,
but have polynomial length when written in binary.

\begin{quote}
	\textsc{Minimum Barrier Shrinkage.}\\
	Input: a family $\{ D(c_i,r_i) \mid i=1,\dots,n \}$ of $n$ open
        disks; two points $x,y\in \RR^2$; a real number~$C$.\\
	Output: Whether there exist \DEF{shrinking values} $\delta_1,\dots,\delta_n\ge 0$ such that
		their \DEF{cost} $\sum_i \delta_i$ is at most~$C$ and
		the family of open disks $\{ D(c_i,r_i-\delta_i) \mid i=1,\dots n \}$ 
		does not separate $x$ and $y$ in $\RR^2$.
\end{quote}
\begin{quote}
	\textsc{Minimum Installation Path.}\\
	Input: a graph $G=(V,E)$ with positive edge weights $w\colon
        E\rightarrow \RR_{>0}$; two vertices $s$ and $t$ of $G$; a real
        number~$C$.\\
	Output: Whether there exists an assignment of \DEF{powers} $p\colon V\rightarrow \RR_{\geq 0}$ to the
		vertices such that its \DEF{cost}
		$\sum_{v\in V} p(v)$ is at most~$C$ and the \DEF{activated edges}
		$E(p)=\{ uv\in E\mid p(u)+p(v)\geq w(uv)\}$ contain an $s$-$t$-path.
\end{quote}

We next discuss the motivating and closest related work.

\paragraph{Minimum barrier shrinkage}
Kumar, Lai and A.~Arora~\cite{kumar2007barrier} introduced the following
\DEF{barrier resilience problem} in the plane.  The input is specified by a
domain $X\subseteq \RR^2$, a family $\DD$ of disks in~$\RR^2$, and two
points $x$ and $y$ in~$X$.  The task is to find an $x$-$y$ curve in $X$
that intersects as few disks of $\DD$ as possible, without counting
multiplicities.  An alternative statement is that we want to find a minimum
cardinality subfamily $\DD'\subseteq \DD$ such that $\DD\setminus \DD'$
does not separate $x$ and $y$ in $X$.  The intuition is that we have
sensors detecting movements from $x$ to $y$, and we want to know how many
sensors can suffer a total failure and still any agent moving from $x$ to
$y$ within $X$ is detected by some of the remaining sensors.

Kumar, Lai and A.~Arora~\cite{kumar2007barrier} showed that the problem can 
be solved in polynomial time when the domain $X$ is a vertical strip bounded between 
two vertical lines $\ell$ and $\ell'$, the point~$x$ lies above and the
point~$y$ lies below
all disks of $\DD$. Let us call this scenario the \DEF{rectangular scenario}.
The main insight is to consider the intersection graph $G$
defined by $\DD\cup \{ \ell, \ell'\}$ and to note that the solution is the maximum number
of $\ell$-$\ell'$ internally vertex-disjoint paths in $G$. Thus, the problem
can be solved in polynomial time by solving maximum flow problems.
The same argument works for any
family of shapes $\SSS$, not just disks, as far as each shape of $\SSS$ is connected.

Despite the claim in the preliminary version~\cite{Kumar:2005:BCW:1080829.1080859} 
of~\cite{kumar2007barrier}, we do not know whether the barrier resilience problem can be solved exactly
in polynomial time when the domain $X$ is all of~$\RR^2$. 
In fact, we know that when $X=\RR^2$ and the family $\DD$ of disks is replaced by 
some other family $\SSS$ of shapes, 
the problem is NP-hard~\cite{AltCGK17,KormanLSS13,TsengK11}.
The difference between the strip and the whole plane is that in the former case
we can use Menger's theorem to relate the number of $\ell$-$\ell'$ paths in the
intersection graph of $\SSS\cup \{ \ell, \ell'\}$ to the $\ell$-$\ell'$ vertex connectivity,
but no such statement applies to cycles that ``separate'' $x$ and $y$.
The computational complexity of the barrier problem in the plane 
for (unit) disks and (unit) squares is a challenging open problem,
and several approximation algorithms have been devised~\cite{BeregK09,ChanK12,KormanLSS13}.

Modeling the fact that sensors are less reliable further away from their placement,
Cabello et al.~\cite{cabello2018minimum} considered the problem of minimizing the total
shrinkage of the disks such that there is an $x$-$y$ curve disjoint from the interior
of the disks. This is precisely the problem \textsc{Minimum Barrier Shrinkage}. 
Cabello et al.~also provided an FPTAS for the rectangular scenario. 
The algorithm uses the connection to vertex-disjoint paths.

We believe that showing NP-hardness for the problem \textsc{Minimum Barrier Shrinkage} is
interesting because of the computational complexity of two closely related problems,
the barrier resilience problem for $X=\RR^2$ and 
the minimum barrier shrinkage problem in the rectangular scenario, are unknown.

\paragraph{Minimum installation path}

There is a rich literature on so-called Activation Network problems.  The
task is to assign a power $p(v)$ to each vertex~$v$ of an edge-weighted
graph $G=(V,E)$ so that the activated edges satisfy a certain connectivity
property, such as for example spanning the whole graph.  Whether an edge
$uv$ is activated depends only on $p(u)$ and~$p(v)$.  In the most general
scenario, one only assumes an oracle telling, given $p(u)$ and~$p(v)$,
whether $u$ and~$v$ are activated, together with a natural monotonicity
constraint: if some choice of $p(u)$ and~$p(v)$ activates $uv$, then
increasing the powers at $u$ and~$v$ still leaves $uv$ activated.  In many
cases, the following simplifying assumption is made: the possible powers at
the vertices are discretized as a finite set of values, denoted by~$D$ (the
\emph{domain}).  See the survey by Nutov~\cite{nutovactivation} for an
overview of the area.

In this context, Panigrahi~\cite[Section~4.1]{Panigrahi2011} considered the
\textsc{Minimum Activation Path} problem: the connectivity constraint is
that the activated edges must include a path between two fixed vertices $s$
and~$t$ of~$G$.  He provided an algorithm with running time
$O(\poly(n,|D|))$, where $n$ is the number of vertices of~$G$ and~$D$ is
the finite domain of values for the power assignments.

Compared to the problem studied by Panigrahi, our \textsc{Minimum
  Installation Path} has two differences.  First, power assignments are not
discretized and can be arbitrary nonnegative real numbers.  Second, whether
an edge $uv$ is activated is simply determined by whether
$p(u)+p(v)\ge w(uv)$.  In this article, we show that the \textsc{Minimum
  Installation Path} is weakly NP-hard.  We also provide a simple fully
polynomial time approximation scheme (FPTAS) relying on the algorithm by
Panigrahi~\cite{Panigrahi2011}.

Our weak NP-hardness result of the \textsc{Minimum Installation Path} problem
is consistent with the result of Panigrahi.
In our reduction, we use large \emph{integer} weights: they have a polynomial bit length,
but they are exponentially large.
Taking $D=\ZZ\cap [0,\max_{uv}w(uv)]$ in the algorithm of Panigrahi,
one only gets a pseudopolynomial time algorithm for such instances. 
This is consistent with a weakly NP-hardness proof.

In a similar vein, we remark that Alqahtani and Erlebach~\cite{alqahtani}
presented algorithms parameterized by the treewidth of the graph in the
case where the goal is to activate $k$ node-disjoint $st$-paths, or
node-disjoint paths between $k$ pairs of terminals.  See also Lando and
Nutov~\cite{lando-nutov} and Althaus et al.~\cite{althaus}.

\paragraph{Relation between the problems}
We are not aware of any polynomial-time reduction from one problem to the other.
Nevertheless, the NP-hardness proofs for both problems are very similar.
The underlying connection between both problems is the following classical 
property: 
in a planar graph $G=(V,E)$, a set of edges $F\subset E$ is a minimum $s$-$t$ cut 
if and only if, in the dual graph~$G^*$, the edges $\{e^*\mid e\in F\}$ 
form a shortest cycle separating the face $s^*$ from
the face $t^*$. 
This relation does not directly provide a reduction even
in the case of planar graphs, but does inspire the adaptation we make.
Actually, our hardness proof for \textsc{Minimum Barrier Shrinkage} reuses
components of the hardness proof for \textsc{Minimum Installation Path}; we
reformulate some special instances of \textsc{Minimum Barrier Shrinkage} in
terms of graphs and then remark that each reformulated instance is equivalent
to an instance of \textsc{Minimum Installation Path}.

\paragraph{Organization}

It seems more convenient to present the NP-hardness of \textsc{Minimum
  Installation Path} first.  We achieve this in
Section~\ref{sec:installation}, together with an FPTAS for this problem.
Then, in Section~\ref{sec:barrier}, we show that the \textsc{Minimum
  Barrier Shrinkage} problem is NP-hard.

\section{Minimum installation path}
\label{sec:installation}
In this section we study the complexity of the \textsc{Minimum Installation
  Path} problem is NP-hard.  We first provide a simple FPTAS for this
problem, and then prove that it is weakly NP-hard.

\subsection{A simple FPTAS}
\label{sec:installation_fptas}

As a side note, we show that the main idea used by Cabello et
al.~\cite{cabello2018minimum} can be adapted to lead to a simple FPTAS for
\textsc{Minimum Installation Path}.  Let us consider an instance of that
problem.

\begin{lemma}\label{L:computelambda}
  In polynomial time, we can compute the smallest value~$\lambda$ such that
  setting $p(v)=\lambda$ for all vertices~$v$ of~$G$ activates at least one
  $st$-path.
\end{lemma}
\begin{proof}
  Whether one $st$-path is activated by the power assignment $p(v)=\lambda$
  (for each vertex~$v$) depends only on the set of activated edges.  So,
  for some edge~$uv$, the minimum value of~$\lambda$ activating at least
  one $st$-path is the minimum value of~$\lambda$ activating edge~$uv$.  In
  other words, the minimum value of $\lambda$ is necessarily of the form
  $w(uv)/2$ for some edge~$uv$.  So, for each edge $uv$, we determine
  whether putting power $w(uv)/2$ to all vertices activates an $st$-path,
  and return the smallest value that does so.
\end{proof}

\begin{lemma}\label{L:proplambda}
  Let \OPT{} be the optimum value of the \textsc{Minimum Installation Path}
  instance.  Then:
  \begin{enumerate}
  \item in an optimal solution, the power assigned to every vertex is at
    most $n\lambda$, where $n$ is the number of vertices of the input
    graph~$G$;
  \item $\lambda\le\OPT$.
  \end{enumerate}

\end{lemma}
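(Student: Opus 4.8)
The plan is to prove the two items separately, using the value $\lambda$ from Lemma~\ref{L:computelambda} as a yardstick.

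For item~(2), I would argue by contradiction or directly: take an optimal solution $p^*$ with cost $\OPT$. Its activated edges contain an $st$-path $P$; let $u_0v_0$ be the edge of $P$ maximizing $w(u_0v_0)$. Since $p^*$ activates $u_0v_0$ we have $p^*(u_0)+p^*(v_0)\ge w(u_0v_0)$, hence the uniform assignment $p(v)=w(u_0v_0)/2$ for all $v$ activates every edge of $P$ (because each edge $uv$ of $P$ satisfies $p(u)+p(v)=w(u_0v_0)\ge w(uv)$), so it activates an $st$-path. By the minimality in Lemma~\ref{L:computelambda}, $\lambda\le w(u_0v_0)/2$. On the other hand $\OPT=\sum_v p^*(v)\ge p^*(u_0)+p^*(v_0)\ge w(u_0v_0)\ge 2\lambda\ge\lambda$, which gives the claim (in fact the slightly stronger $\OPT\ge 2\lambda$, but $\lambda\le\OPT$ is all we need).

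For item~(1), I would again start from an optimal solution $p^*$ and a witnessing $st$-path $P$, and observe that we may assume without loss of generality that $p^*$ assigns power $0$ to every vertex not on $P$ — zeroing those powers can only decrease the cost and still activates $P$ — and moreover that $P$ is a simple path, so it has at most $n$ vertices. Now suppose some vertex $v$ on $P$ has $p^*(v)>n\lambda$. The key point is that along the path, no single vertex needs that much: replacing $p^*$ on $P$ by the uniform assignment $p(v)=\lambda$ for $v\in P$ and $0$ elsewhere already activates $P$ (since for each edge $uv$ of $P$, both endpoints get $\lambda$, and $\lambda+\lambda=2\lambda\ge$... ). Hmm — wait, I need $2\lambda\ge w(uv)$ for every edge of $P$, which need not hold: $\lambda$ only guarantees that the \emph{uniform} assignment activates \emph{some} $st$-path, not this particular one. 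So the cleaner argument is: if $p^*$ is optimal with witness $P$, its cost is $\OPT$; but the uniform assignment $p\equiv\lambda$ activates some $st$-path $P'$ at cost $n\lambda$, so $\OPT\le n\lambda$. Therefore no vertex can receive power exceeding $\OPT\le n\lambda$ in an optimal solution (a vertex with power $>n\lambda$ alone would make the cost exceed $n\lambda\ge\OPT$), which is exactly item~(1).

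The main obstacle — and the reason to be careful — is precisely the temptation to reason about activating the \emph{specific} path $P$ with the uniform assignment $p\equiv\lambda$; this is false in general, since $\lambda$ is tailored to the cheapest uniform-activatable path, not to $P$. The correct line is to use $p\equiv\lambda$ only as a feasible global solution bounding $\OPT\le n\lambda$ from above, and to use the edge-of-maximum-weight trick on the \emph{optimal} path to bound $\OPT$ from below by $2\lambda$; both inequalities then drop out cleanly, and item~(1) follows because any single power exceeding $n\lambda\ge\OPT$ already violates optimality.
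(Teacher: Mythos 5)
Your proof is correct. Item~(1) is essentially the same argument as the paper's: the uniform assignment $p\equiv\lambda$ gives $\OPT\le n\lambda$, and hence no vertex of an optimal solution can receive power exceeding $n\lambda\ge\OPT$. Your self-correction midway (rejecting the attempt to use $p\equiv\lambda$ to activate the \emph{specific} optimal path $P$) is exactly the right move.

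For item~(2), you take a genuinely different route from the paper. The paper argues by contradiction on the aggregate: if $\OPT<\lambda$ then each individual power in an optimal solution is strictly below $\lambda$, so the uniform assignment at level $\max_v p^*(v)<\lambda$ would already activate an $st$-path, contradicting the minimality of~$\lambda$. You instead locate the heaviest edge $u_0v_0$ on the witnessing path $P$, observe that the uniform level $w(u_0v_0)/2$ activates all of $P$, whence $\lambda\le w(u_0v_0)/2$, and then bound $\OPT\ge p^*(u_0)+p^*(v_0)\ge w(u_0v_0)\ge 2\lambda$. This is constructive rather than by contradiction, and in fact proves the strictly stronger inequality $\OPT\ge 2\lambda$ (when $s\ne t$; the degenerate case $s=t$ has $\lambda=\OPT=0$). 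Either approach suffices for the FPTAS; your sharper constant would only shave a factor of two off the additive error, which is irrelevant to polynomiality.
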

\begin{proof}
  \begin{enumerate}
  \item $\OPT\le n\lambda$ because the definition of~$\lambda$ implies a
    feasible solution of cost $n\lambda$.  This implies~(1);
  \item $\lambda\le\OPT$ because otherwise, some $st$-path would be
    activated by some powers strictly smaller than~$\lambda$ at each
    vertex, contradicting the definition of~$\lambda$.\qedhere
  \end{enumerate}
\end{proof}

\begin{proposition}
  \textsc{Minimum Installation Path} admits an FPTAS.
\end{proposition}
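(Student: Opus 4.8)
The plan is to combine the two preparatory lemmas with a standard rounding-and-scaling argument, using Panigrahi's algorithm as a black box on a suitably discretized domain. First I would fix the desired accuracy $\eps>0$ and compute $\lambda$ via Lemma~\ref{L:computelambda}. By Lemma~\ref{L:proplambda}, every vertex in an optimal solution gets power in the interval $[0,n\lambda]$, and $\lambda\le\OPT\le n\lambda$. So it suffices to search for a near-optimal solution whose powers all lie in $[0,n\lambda]$.

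Next I would define a grid of candidate power values: let $g=\eps\lambda/(n^2)$ (up to constant factors) and set $D=\{0,g,2g,\dots\}\cap[0,n\lambda]$, a set of $O(n^3/\eps)$ values, hence of size polynomial in $n$ and $1/\eps$. I would then run Panigrahi's $O(\poly(n,|D|))$ algorithm for \textsc{Minimum Activation Path} with this domain~$D$ and the activation rule ``$uv$ active iff $p(u)+p(v)\ge w(uv)$''. The output is an optimal solution $p'$ among all power assignments with values in~$D$.

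The key step is bounding the error. Take an optimal (real-valued) solution $p^*$ with $\cost(p^*)=\OPT$ and all powers in $[0,n\lambda]$. Round each power \emph{up} to the nearest grid value: $\hat p(v)=g\lceil p^*(v)/g\rceil$. Then $\hat p(v)\le p^*(v)+g$, so $\hat p$ still activates the same $st$-path (activation is monotone in the powers), $\hat p$ takes values in~$D$, and $\cost(\hat p)\le \OPT + ng = \OPT + \eps\lambda/n \le \OPT + \eps\,\OPT/n \le (1+\eps)\OPT$, using $\lambda\le\OPT$. Since $p'$ is optimal over domain~$D$, $\cost(p')\le\cost(\hat p)\le(1+\eps)\OPT$. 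The running time is polynomial in $n$ and $|D|=O(n^3/\eps)$, hence polynomial in $n$ and $1/\eps$, so this is an FPTAS.

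The only mild obstacle is checking that the endpoint $n\lambda$ is genuinely on the grid (or handled by allowing the last grid value to be $n\lambda$ itself) and that $\hat p(v)$ never exceeds $n\lambda$ — the latter could fail by one grid step if $p^*(v)$ is exactly $n\lambda$, which is fixed by taking the domain up to $n\lambda + g$ or simply clamping; this does not affect feasibility since larger power only helps activation, nor the cost bound beyond an extra additive $g$ per vertex, already absorbed. Everything else is routine.
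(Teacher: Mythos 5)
Your proof is correct and follows essentially the same route as the paper: compute $\lambda$ via Lemma~\ref{L:computelambda}, discretize the powers on a grid bounded by $n\lambda$, invoke Panigrahi's algorithm as a black box, and charge the rounding error against $\OPT$ using $\lambda\le\OPT$ from Lemma~\ref{L:proplambda}. The only cosmetic difference is that you use a grid step of $\eps\lambda/n^2$ where the paper uses $\eps\lambda/n$ (which already suffices: the per-vertex error $\eps\lambda/n$ times $n$ vertices gives $\eps\lambda\le\eps\OPT$), so your domain is larger by a factor of $n$ but still polynomial; your explicit round-up argument and the remark about the grid endpoint are reasonable elaborations of what the paper leaves implicit.
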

\begin{proof}
  Let $\eps>0$ be given.  We first compute~$\lambda$ using
  Lemma~\ref{L:computelambda}.  Then we apply the algorithm by
  Panigrahi~\cite[Section~4.1]{Panigrahi2011} to the instance, in which the
  domain is defined by
  \[D=\{k\eps\lambda/n\ \mid\ k=0,\ldots,n^2/\eps\}.\]

  By Lemma~\ref{L:proplambda}(1), we obtain a feasible solution to the
  original problem, the cost of which is within an additive error of at
  most $\eps\lambda/n$ per vertex from~\OPT, hence with an additive error
  of at most $\eps\lambda$ overall.  By Lemma~\ref{L:proplambda}(2), this
  is at most $\eps\OPT$.  Clearly the running time is polynomial in $n$
  and~$1/\eps$.
\end{proof}

We can readily extend this argument to more general activation functions.
For example, assume that each edge $uv$ is activated if and only if
$\alpha(uv)p(u)+\beta(uv)p(v)\ge w(uv)$, for some positive constants
$\alpha(uv)$, $\beta(uv)$, and~$w(uv)$.  (Our setup corresponds to
$\alpha(uv)=\beta(uv)=1$.)  The same argument as above shows that this
extended version of \textsc{Minimum Installation Path} admits an FPTAS.

\subsection{Greedy solution in a path}

In the rest of Section~\ref{sec:installation}, we focus on proving
NP-hardness of \textsc{Minimum Installation Path}.  We first consider the
particular case of a path.

Consider a graph~$G$ and a path $\pi=v_0,\dots,v_n$ in $G$. 
We define greedily a
power assignment $p^*_\pi$ on the vertices of~$G$ to activate $\pi$, 
in a way that power is
pushed forward along $\pi$ as much as possible.
Formally, the \DEF{greedy power assignment along $\pi$} is
	\begin{equation}
		p^*_\pi(v) ~=~\begin{cases}
					0 &\text{if $v$ does not belong to $\pi$ or $v=v_0$,}\\
					0 &\text{if $v=v_i$, $i>0$ and $p^*_\pi(v_{i-1})\ge w(v_{i-1}v_i)$},\\
					w(v_{i-1}v_i)-p^*_\pi(v_{i-1}) &\text{if $v=v_i$, $i>0$ and $p^*_\pi(v_{i-1})< w(v_{i-1}v_i)$.}
				\end{cases}
	\label{eq:powers}
	\end{equation}
	
For a power assignment~$p$, let $\cost(p)$ denote the total cost of~$p$,
namely, the sum of the powers at the vertices. 
For path $\pi$, let $\opt(\pi)$ be the cost of the minimum cost 
power assignment that activates $\pi$. The following lemma tells that 
the greedy power assignment along $\pi$ has minimum cost to activate $\pi$.

\begin{lemma}
\label{le:path}
For each path $\pi$, $\cost(p^*_\pi)=\opt(\pi)$.
\end{lemma}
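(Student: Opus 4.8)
The plan is to prove both inequalities $\cost(p^*_\pi) \ge \opt(\pi)$ and $\cost(p^*_\pi) \le \opt(\pi)$. The first is immediate: by construction $p^*_\pi$ activates every edge $v_{i-1}v_i$ of $\pi$ (in the second case the edge is already activated, in the third case $p^*_\pi(v_{i-1}) + p^*_\pi(v_i) = w(v_{i-1}v_i)$), so $p^*_\pi$ is a feasible power assignment for activating $\pi$ and hence $\cost(p^*_\pi) \ge \opt(\pi)$. The real content is the reverse inequality, for which I would take an arbitrary power assignment $p$ that activates $\pi$ and show $\cost(p) \ge \cost(p^*_\pi)$; since we only care about the cost, we may assume $p$ vanishes outside $\pi$ (setting extra powers to $0$ only helps), so $p$ is determined by the values $p(v_0), \dots, p(v_n)$ subject to $p(v_{i-1}) + p(v_i) \ge w(v_{i-1}v_i)$ for all $i$.

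The key step is an exchange/prefix argument showing that the greedy assignment dominates any feasible assignment in the following cumulative sense: for every index $j \in \{0, 1, \dots, n\}$,
\[
  \sum_{i=0}^{j} p^*_\pi(v_i) ~\le~ \sum_{i=0}^{j} p(v_i).
\]
I would prove this by induction on $j$. The base case $j=0$ is trivial since $p^*_\pi(v_0) = 0 \le p(v_0)$. For the inductive step, suppose the prefix inequality holds at $j-1$. Consider the greedy rule at $v_j$. If $p^*_\pi(v_j) = 0$, then adding $p^*_\pi(v_j) = 0 \le p(v_j)$ to both sides preserves the inequality at $j$. Otherwise $p^*_\pi(v_j) = w(v_{j-1}v_j) - p^*_\pi(v_{j-1})$, i.e.\ $p^*_\pi(v_{j-1}) + p^*_\pi(v_j) = w(v_{j-1}v_j)$, whereas feasibility of $p$ gives $p(v_{j-1}) + p(v_j) \ge w(v_{j-1}v_j)$; subtracting, $p^*_\pi(v_{j-1}) + p^*_\pi(v_j) \le p(v_{j-1}) + p(v_j)$, and combining this with the induction hypothesis $\sum_{i=0}^{j-2} p^*_\pi(v_i) \le \sum_{i=0}^{j-2} p(v_i)$ yields $\sum_{i=0}^{j} p^*_\pi(v_i) \le \sum_{i=0}^{j} p(v_i)$. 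Taking $j = n$ gives $\cost(p^*_\pi) = \sum_{i=0}^n p^*_\pi(v_i) \le \sum_{i=0}^n p(v_i) = \cost(p)$, and minimizing over feasible $p$ gives $\cost(p^*_\pi) \le \opt(\pi)$.

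The one point that needs a little care — and what I would flag as the main obstacle — is the inductive step in the case $p^*_\pi(v_j) = 0$ arising from the second branch of the definition, namely when $p^*_\pi(v_{j-1}) \ge w(v_{j-1}v_j)$ but $p^*_\pi(v_j)$ is nonetheless set to $0$: here the naive "subtract the edge constraint" trick does not directly apply because $v_j$ carries no power at all. The resolution is exactly as indicated above: in this branch we simply use $p^*_\pi(v_j) = 0 \le p(v_j)$ together with the induction hypothesis at $j-1$, with no appeal to the edge constraint at all. So the two branches of the greedy definition are handled by two genuinely different (but each very short) arguments, and it is worth stating them separately to make sure the case $p^*_\pi(v_j)=0$ is not glossed over. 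Everything else is routine bookkeeping.
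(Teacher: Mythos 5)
Your proof is correct, but takes a genuinely different route from the paper's. The paper uses an exchange argument: starting from an arbitrary feasible $p$ (taken to be zero off $\pi$), it repeatedly locates the first index $i$ where $p$ deviates from $p^*_\pi$, observes that $p$ must carry an excess $\Delta>0$ there, and shifts that excess forward to $v_{i+1}$ (or simply drops it, if $i=n$) without increasing cost, until $p$ has been transformed into $p^*_\pi$. You instead prove the prefix-dominance inequality $\sum_{i=0}^{j} p^*_\pi(v_i) \le \sum_{i=0}^{j} p(v_i)$ by induction on $j$ and set $j=n$. Both are standard greedy-optimality styles; the paper's exhibits $p^*_\pi$ as a normal form reachable by cost-nonincreasing local moves, while yours gives a direct cumulative comparison. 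One formal remark: in your third-branch inductive step you combine the edge constraint at $\{v_{j-1},v_j\}$ with the prefix inequality at $j-2$, but the hypothesis you stated is the prefix inequality at $j-1$, which does not entail the one at $j-2$; you should phrase the induction as strong induction (assume the claim for all indices $<j$) or carry the two-step invariant explicitly. This is a triviality to repair and does not affect the substance. Finally, the ``main obstacle'' you flag in the second branch is not really one — as your own one-line resolution shows, that branch is the trivial half of the step, needing only $p^*_\pi(v_j)=0\le p(v_j)$ together with the hypothesis at $j-1$.
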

\begin{proof}
  It is clear that $p^*_\pi$ activates all the edges of~$\pi$.  Let $p$ be
  another power assignment activating all edges of~$\pi$.  We have to show that
  $\cost(p^*_\pi)\le \cost(p)$.
  
  We can assume that $p(v)=0$ at all vertices $v$ outside~$\pi$. 
  Otherwise, we change~$p$ to have this property. This
  reassignment of power would decrease the cost and would keep activating 
  the path~$\pi$.
  
  The strategy is to
  gradually transform~$p$ into~$p^*_\pi$ while keeping all edges of~$\pi$
  activated and without increasing the value of
  $\cost(p)$.   
  The property $\cost(p^*_\pi)\le \cost(p)$ is trivially correct if $p=p^*_\pi$. 
  So assume $p\ne p^*_\pi$ and let
  $i$ be the smallest integer such that $p(v_i)\ne p^*_\pi(v_i)$.  Because all
  edges are activated, and by construction of~$p^*_\pi$, we must have
  $p(v_i)>p^*_\pi(v_i)$.  Let $\Delta=p(v_i)-p^*_\pi(v_i)>0$.  There are two
  cases:
  \begin{itemize}
  \item Assume $i\le n-1$.  Update~$p$ by decreasing $p(v_i)$ by~$\Delta$
    and increasing $p(v_{i+1})$ by~$\Delta$.  Since each edge of~$\pi$ is activated
    by~$p^*_\pi$ and by~$p$ before this transformation, 
    each edge of~$\pi$ is still activated by the new~$p$.  Moreover, $\cost(p)$ 
	is unchanged.
  \item Assume $i=n$.  Update $p$ by decreasing $p(v_n)$ by~$\Delta$.
    Again, each edge of~$\pi$ is still activated.  The cost has decreased
    by~$\Delta$.
  \end{itemize}
  This transformation does not increase the value of $\cost(p)$. 
  Moreover, the new power assignment coincides
  with~$p^*_\pi$ on vertices $v_0,\ldots,v_i$.  Thus, after a finite number of
  steps, $p=p^*_\pi$.  This proves the lemma.
\end{proof}

For the path $\pi=v_0,\dots,v_n$, let $\varphi(\pi)=p^*_\pi(v_n)\ge0$.
That is, $\varphi(\pi)$ is the power assignment given by the greedy power assignment along $\pi$
to the final vertex. 
Since $p^*_\pi(v_n)$ depends on $p^*_\pi(v_{n-1})$, we have the following.

\begin{lemma}
\label{le:prolong}
	Let $\pi$ be the path $v_0,\dots,v_n$ and let $\pi'$
	be the path $v_0,\dots,v_n,u$. (Thus, $\pi'$ extends $\pi$ by an additional edge $v_nu$.)
	Then $\varphi(\pi')= \max\{ 0, w(v_n u)-\varphi(\pi) \}$ and
	$\opt(\pi')=\opt(\pi)+ \varphi(\pi')$.
\end{lemma}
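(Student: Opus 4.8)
The plan is to read off both equalities directly from the definitions, since $\pi'$ shares its first $n+1$ vertices with $\pi$ and the greedy assignment is defined by a left-to-right recursion.

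First I would observe that the greedy power assignment along $\pi'$ restricted to the vertices $v_0,\dots,v_n$ coincides with $p^*_\pi$. This is immediate from~\eqref{eq:powers}: the value $p^*_{\pi'}(v_i)$ for $i\le n$ depends only on $p^*_{\pi'}(v_{i-1})$ and the weight $w(v_{i-1}v_i)$, so an easy induction on~$i$ gives $p^*_{\pi'}(v_i)=p^*_\pi(v_i)$ for all $i\le n$; in particular $p^*_{\pi'}(v_n)=\varphi(\pi)$. Applying the third and second cases of~\eqref{eq:powers} to the vertex $u=v_{n+1}$, we get $p^*_{\pi'}(u)=w(v_nu)-\varphi(\pi)$ if $\varphi(\pi)<w(v_nu)$ and $p^*_{\pi'}(u)=0$ otherwise, which is exactly $\varphi(\pi')=\max\{0,w(v_nu)-\varphi(\pi)\}$.

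For the second equality, by Lemma~\ref{le:path} we have $\opt(\pi')=\cost(p^*_{\pi'})$ and $\opt(\pi)=\cost(p^*_\pi)$. Since $p^*_{\pi'}$ and $p^*_\pi$ agree on $v_0,\dots,v_n$ and $p^*_{\pi'}$ additionally assigns power $\varphi(\pi')$ to~$u$ (and $0$ to every vertex not on $\pi'$), summing over all vertices gives $\cost(p^*_{\pi'})=\cost(p^*_\pi)+\varphi(\pi')$, i.e. $\opt(\pi')=\opt(\pi)+\varphi(\pi')$. One minor point to state carefully is that $u$ is a genuinely new vertex of the path, not a repeated one; since $\pi'$ is a path this holds, so adding its contribution does not double-count.

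There is no real obstacle here; the only thing to be slightly careful about is that the greedy recursion is "causal" (each value depends only on earlier vertices on the path), which is what makes the restriction of $p^*_{\pi'}$ to $\pi$ equal $p^*_\pi$ rather than merely activate $\pi$. Everything else is a one-line unwinding of the definition of $\varphi$ and of $\cost$.
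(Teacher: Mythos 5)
Your proof is correct and follows essentially the same route as the paper's: observe that $p^*_{\pi'}$ and $p^*_\pi$ agree on $v_0,\dots,v_n$ (so that $p^*_{\pi'}(v_n)=\varphi(\pi)$), unwind the definition to obtain the formula for $\varphi(\pi')$, and invoke Lemma~\ref{le:path} for both $\pi$ and $\pi'$ to relate $\opt$ to $\cost$ of the greedy assignments. The paper states the agreement of the two assignments as an immediate consequence of the definitions where you spell out the induction, but the argument is identical.
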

\begin{proof}
  From the definition of the greedy power assignment along $\pi$ and $\pi'$,
	the power assignments $p^*_\pi$ and $p^*_{\pi'}$ differ only 
	at vertex $u$.  We have:
	\begin{align*}
		\varphi(\pi') = p^*_{\pi'}(u) ~&=~ \max\{ 0, w(v_n u)-p^*_{\pi'}(v_n) \}
				~=~ \max\{ 0, w(v_n u)-p^*_\pi(v_n) \}\\
				~&=~ \max\{ 0, w(v_n u)-\varphi(\pi)) \}.
	\end{align*}
  This proves the claim for~$\varphi(\pi')$.  
	Because of Lemma~\ref{le:path} for $\pi$ and $\pi'$ we also get
	\begin{align*}
		\opt(\pi') ~&=~ \cost(p^*_{\pi'}) ~=~ \cost(p^*_\pi) + p^*_{\pi'}(u) - p^*_\pi(u)\\
				  ~&=~ \opt(\pi)+ \varphi(\pi') -0. \qedhere
	\end{align*}
\end{proof}

A consequence of Lemma~\ref{le:path} is the following integrality property.
\begin{lemma}
  \label{le:integral}
  Assume that the weight function $w\colon E(G)\rightarrow \RR_{>0}$ takes
  only integer values, and that $C$ is also an integer.  Then, for any
  $\alpha\in[0,1)$, \textsc{Minimum Installation Path}$(G,w,s,t,C)$ has a
  positive answer if and only if \textsc{Minimum Installation
    Path}$(G,w,s,t,C+\alpha)$ has a positive answer.
\end{lemma}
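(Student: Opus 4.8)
The plan is to reduce everything to the structure of optimal power assignments established in the earlier lemmas. The key observation is that an optimal solution to \textsc{Minimum Installation Path} need not assign integer powers, but its \emph{cost} is integral whenever $w$ and $C$ are; once this is shown, the equivalence between threshold $C$ and threshold $C+\alpha$ (for $\alpha\in[0,1)$) follows immediately, since an integer cost is $\le C$ if and only if it is $\le C+\alpha$.

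First I would observe that if the instance has a positive answer for threshold $C+\alpha$, it trivially has one for the larger-or-equal... no: we need both directions, and the nontrivial one is that a feasible solution of cost at most $C+\alpha$ yields one of cost at most $C$. So the crux is: \emph{the optimum value $\OPT$ of the instance is an integer.} To see this, let $p$ be an optimal power assignment, and let $P$ be an $st$-path all of whose edges are activated by $p$. Then $p$ activates $P$, so by Lemma~\ref{le:path}, $\cost(p)\ge\opt(P)=\cost(p^*_P)$. On the other hand $p^*_P$ is itself a feasible solution for the instance (it activates the path $P$, hence activates an $st$-path), so $\cost(p^*_P)\ge\OPT$. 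Combined with $\cost(p)=\OPT$ by optimality, we get $\OPT=\opt(P)=\cost(p^*_P)$. Finally, $\cost(p^*_P)$ is an integer: by the recursive formula of Lemma~\ref{le:prolong} (or directly from~\eqref{eq:powers}), if $w$ is integer-valued, then along any path the greedy power $p^*_P(v_i)$ at each vertex is an integer --- this is an easy induction, since $p^*_P(v_0)=0\in\ZZ$ and each subsequent value is either $0$ or $w(v_{i-1}v_i)-p^*_P(v_{i-1})$, a difference of integers. Hence the sum $\cost(p^*_P)$ is an integer.

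With $\OPT\in\ZZ$ in hand, the lemma is immediate. If \textsc{Minimum Installation Path}$(G,w,s,t,C)$ has a positive answer, then $\OPT\le C\le C+\alpha$, so \textsc{Minimum Installation Path}$(G,w,s,t,C+\alpha)$ has a positive answer. Conversely, if the latter has a positive answer, then $\OPT\le C+\alpha$; since $\OPT$ and $C$ are integers and $0\le\alpha<1$, we conclude $\OPT\le C$, so the former has a positive answer as well.

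I do not expect a serious obstacle here; the only point requiring a little care is the claim that an optimal solution can be taken to ``live on a single path'' in the sense above --- more precisely, that $\OPT$ equals $\opt(P)$ for some $st$-path $P$. This is not quite automatic from Lemma~\ref{le:path} alone, because an arbitrary optimal $p$ might activate many edges and one must exhibit \emph{some} activated $st$-path $P$ and then argue $\cost(p^*_P)\le\cost(p)$; the inequality is exactly Lemma~\ref{le:path} applied to $P$ (since $p$ activates $P$), and then $\cost(p^*_P)\ge\OPT$ because $p^*_P$ is feasible, forcing equality. So the whole argument is a short chain of inequalities plus the integrality induction, and the main conceptual content is simply recognizing that $\OPT$ is always attained by a greedy-along-a-path assignment.
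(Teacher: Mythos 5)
Your proof is correct and follows essentially the same route as the paper's: take an optimal assignment, pick an activated $s$-$t$ path $\pi$, apply Lemma~\ref{le:path} to conclude the optimum equals $\cost(p^*_\pi)$, and observe from~\eqref{eq:powers} that $p^*_\pi$ (hence its cost) is integral. You are slightly more explicit than the paper about why $\cost(p)=\opt(\pi)$ — you spell out the two-sided inequality via feasibility of $p^*_\pi$ — which is a welcome clarification of a step the paper compresses into a single appeal to Lemma~\ref{le:path}.
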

\begin{proof}
  Assume that \textsc{Minimum Installation Path}$(G,w,s,t,C+\alpha)$ is has a
  positive answer.  Consider a power assignment $p$ corresponding to a
  feasible solution of minimum cost (at most~$C+\alpha$); let $\pi$ be an
  $s$-$t$ path activated by $p$.  Because of Lemma~\ref{le:path} we have
  $\cost(p)=\opt(\pi)=\cost(p^*_\pi)$.  From the inductive definition
  \eqref{eq:powers} of $p^*_\pi$, we see that $p^*_\pi$ assigns integral
  powers to all vertices, and thus $\cost(p^*_\pi)=\sum_v p^*_\pi(v)$ is an
  integer, which is at most~$C$.  So \textsc{Minimum Installation
    Path}$(G,w,s,t,C)$ has a positive answer.
\end{proof}

\subsection{The reduction}
\label{sec:reduction}

Now we provide the reduction. The reduction is inspired by the reduction
used to show that the restricted shortest path problem is NP-hard;
this seems to be folklore and attributed to Megiddo 
by Garey and Johnson~\cite[Problem ND30]{GareyJ79}.
We use the notation $[n]=\{ 1,\dots, n\}$
and reduce from the following problem.

\begin{quote}
	\textsc{Subset Sum}\\
	Input: a sequence $a_1,\dots,a_n$ of positive integers and a positive integer $b$.\\
	Question: is there a set of indices $I\subseteq [n]$ such that $\sum_{i\in I} a_i=b$?
\end{quote}

The problem \textsc{Subset Sum} is one of the standard weakly NP-hard
problems that can be solved in pseudopolynomial time via 
dynamic programming~\cite[Section 4.2]{GareyJ79}.
In particular, when the numbers $a_i$ are bounded by a polynomial in $n$,
the problem can be solved in polynomial time.

Set $L$ to be an integer strictly larger than $2\sum_{i\in [n]} a_i$. 
Then, for each $I\subseteq [n]$ we have $2\sum_{i\in I} a_i< L$.

We construct a graph $G=G(a_1,\dots,a_n,b)$ as follows (see Figure~\ref{fig:installation1}).
$G$ will include vertices $s$, $t$, $u_1,\dots,u_n$.
Let us use the notation $u_0=s$.
For each $i\in [n]$, we put between $u_{i-1}$ and $u_i$ two paths, each of
length two, 
one path with weights $L+2a_i$ and $L+2a_i$,
and the other path with weights $L+ a_i$ and $L+3 a_i$,
as we go from $u_{i-1}$ to $u_i$.
Finally, we put the edge $u_n t$ with weight $2b$.
This finishes the construction of $G$.

\begin{figure}
	\includegraphics[width=\textwidth,page=1]{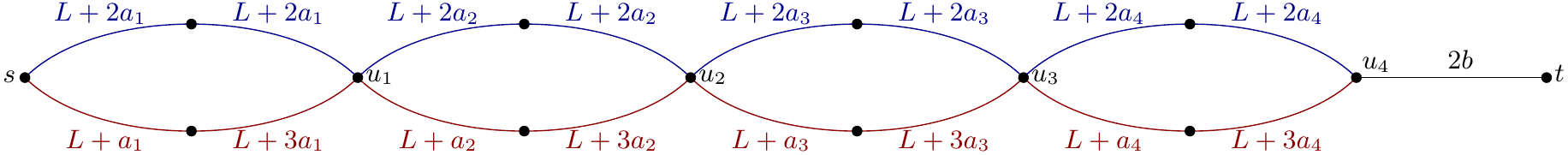}
	\caption{The graph $G$ when $n=4$.}
	\label{fig:installation1}
\end{figure}

\begin{lemma}
\label{le:feasible}
	There exists a path $\pi$ from $s$ to $u_n$ in $G$
	with $\opt(\pi)=c$ and $\varphi(\pi)=r$ 
	if and only if there exists $I\subseteq [n]$ such that
	\[
		r = 2\sum_{i\in I} a_i 
		~~~~~\text{and}~~~~~
		c = nL + 2\sum_{i\in [n]} a_i + \sum_{i\in I} a_i.
	\]
\end{lemma}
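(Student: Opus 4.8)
The plan is to process the graph gadget by gadget, using Lemma~\ref{le:prolong} to track how the pair $(\opt(\cdot),\varphi(\cdot))$ evolves as we extend a path one edge at a time. The first observation is structural: $G$ (ignoring the leaf vertex~$t$, which cannot lie on an $s$-$u_n$ path) is the series composition of $n$ gadgets, where gadget~$i$ joins $u_{i-1}$ to $u_i$ by exactly two internally disjoint length-two paths --- call \emph{option~$A$} the one with weights $L+2a_i,L+2a_i$ and \emph{option~$B$} the one with weights $L+a_i,L+3a_i$. Hence every (simple) path $\pi$ from $s=u_0$ to $u_n$ visits $u_0,u_1,\dots,u_n$ in order, using option~$A$ or option~$B$ in each gadget, so such paths are in bijection with subsets $I\subseteq[n]$, where $i\in I$ means gadget~$i$ is traversed via option~$B$. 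It then suffices to show that the path $\pi$ corresponding to $I$ satisfies $\varphi(\pi)=2\sum_{i\in I}a_i$ and $\opt(\pi)=nL+2\sum_{i\in[n]}a_i+\sum_{i\in I}a_i$; both directions of the lemma follow at once (forward: read $I$ off the gadget choices of~$\pi$; backward: build $\pi$ from~$I$).

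The crux is a per-gadget calculation. Suppose $\pi$ is a path ending at $u_{i-1}$ with $\varphi(\pi)=r$ where $0\le r<L$. Extending $\pi$ through gadget~$i$ means appending two edges, so we invoke Lemma~\ref{le:prolong} twice. Via option~$A$: after the first edge the new value of $\varphi$ is $\max\{0,(L+2a_i)-r\}=L+2a_i-r$ (here we use $r<L$), and after the second it is $\max\{0,(L+2a_i)-(L+2a_i-r)\}=\max\{0,r\}=r$; thus $\varphi$ is unchanged and, by the cost part of Lemma~\ref{le:prolong}, $\opt$ increases by $(L+2a_i-r)+r=L+2a_i$. Via option~$B$: after the first edge $\varphi$ becomes $\max\{0,(L+a_i)-r\}=L+a_i-r$, and after the second it becomes $\max\{0,(L+3a_i)-(L+a_i-r)\}=2a_i+r$; thus $\varphi$ increases by $2a_i$ and $\opt$ increases by $(L+a_i-r)+(2a_i+r)=L+3a_i$.

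To conclude, induct on the number of processed gadgets, starting from the one-vertex path at $s=u_0$, for which $\opt=0$ and $\varphi=0$. At every stage the current value of $\varphi$ equals $2\sum_{j\in J}a_j$ for some $J\subseteq[n]$, and is therefore strictly smaller than $L$ by the choice $L>2\sum_{i\in[n]}a_i$; hence the per-gadget calculation above applies at every step without its $\max\{0,\cdot\}$ terms ever clipping to~$0$. An immediate induction then gives that, after gadgets $1,\dots,i$ with option-$B$ set $I\cap[i]$, the prefix path has $\varphi=2\sum_{j\in I\cap[i]}a_j$ and $\opt=iL+2\sum_{j\le i}a_j+\sum_{j\in I\cap[i]}a_j$; setting $i=n$ yields the two claimed formulas. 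The only subtle point --- and the reason the constant $L$ is chosen as large as it is --- is exactly this non-clipping of the truncations in Lemma~\ref{le:prolong}; everything else is routine bookkeeping.
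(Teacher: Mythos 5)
Your proof is correct and follows essentially the same route as the paper's: identify the two per-gadget choices (upper/lower, your options $A$/$B$), apply Lemma~\ref{le:prolong} twice per gadget to track $(\opt,\varphi)$, and observe that the choice $L>2\sum_i a_i$ keeps $\varphi$ small enough that the $\max\{0,\cdot\}$ never clips, so the bookkeeping closes by induction. The only cosmetic difference is that you make the bijection between $s$-$u_n$ paths and subsets $I\subseteq[n]$ fully explicit, whereas the paper leaves that implicit.
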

\begin{proof}
  Consider the two paths, each of length two, connecting $u_{i-1}$ to
  $u_i$.  The \emph{upper choice at $i$} is the path with weights $L+2a_i$;
  similarly, the \emph{lower choice at $i$} is the path with weights
  $L+a_i$ and $L+ 3 a_i$. See Figure~\ref{fig:installation2}.

	\begin{figure}
		\includegraphics[width=\textwidth,page=2]{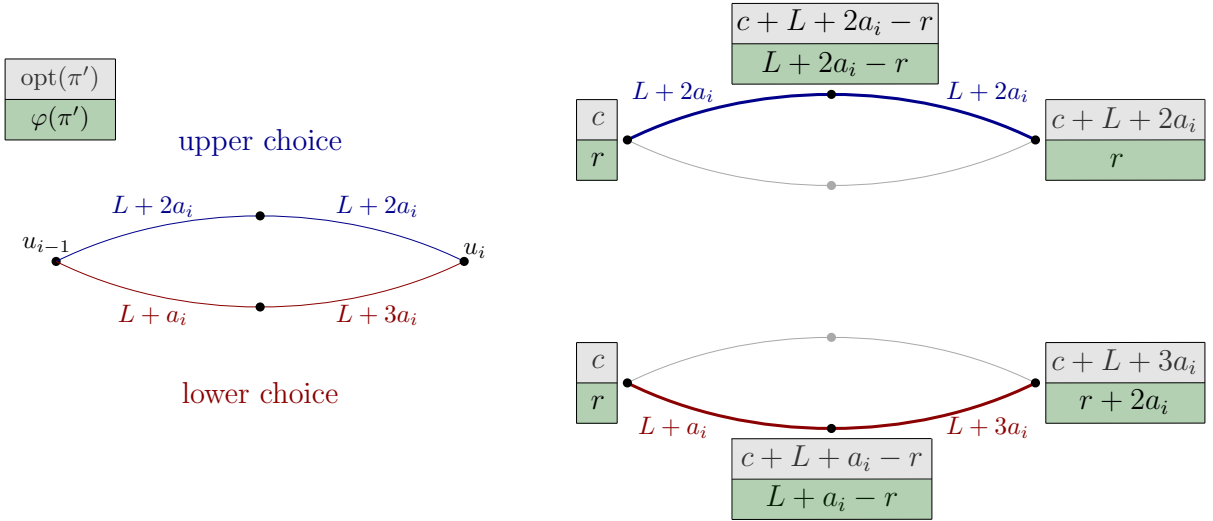}
		\caption{Left: Upper and lower choice at $i$.
			Right: the change in $\opt(\pi')$ and $\varphi(\pi')$ depending
				on whether the path is extended by the upper or the lower choice.}
		\label{fig:installation2}
	\end{figure}

	Assume that we have a path $\pi'$ that goes from $s=u_0$ to $u_{i-1}$
	with $\varphi(\pi') \le L$.
	Let $\pi'_u$ be the concatenation of $\pi'$ with the upper choice,
	and let $\pi'_\ell$ be the concatenation of $\pi'$ with the lower choice.
	Because of Lemma~\ref{le:prolong}, we obtain that
	$\opt(\pi'_u)=\opt(\pi')+L+2a_i$ and $\varphi(\pi'_u)=\varphi(\pi')$, while
	$\opt(\pi'_\ell)=\opt(\pi')+L+ 3 a_i$ and 
	$\varphi(\pi_\ell)=\varphi(\pi') +2a_i$.
	See Figure~\ref{fig:installation2}.
	Here, the assumption $\varphi(\pi') \le L$ has been important to ensure that
	in using Lemma~\ref{le:prolong} the maximum defining $\varphi(\cdot)$ is not at $0$.
	It easily follows by induction on $i$ that, for each path
	$\pi'$ from $s=u_0$ to $u_i$, we indeed have $\varphi(\pi') \le \sum_{j=1}^i 2a_i$,
	and thus the hypothesis is fulfilled for each $i\in [n]$.
	
	The intuition here is that the lower choice has a larger cost, but
	keeps more power at the extreme of the prefix path for later use.	
	See Figure~\ref{fig:installation3} for a concrete example showing the
	values $\opt(\pi')$ and $\varphi(\pi')$ for paths $\pi'$ from $s=u_0$ to $u_i$.
	It also helps understanding the idea behind the reduction.
	
	\begin{figure}[tb]
		\centering
		\includegraphics[width=\textwidth,page=3]{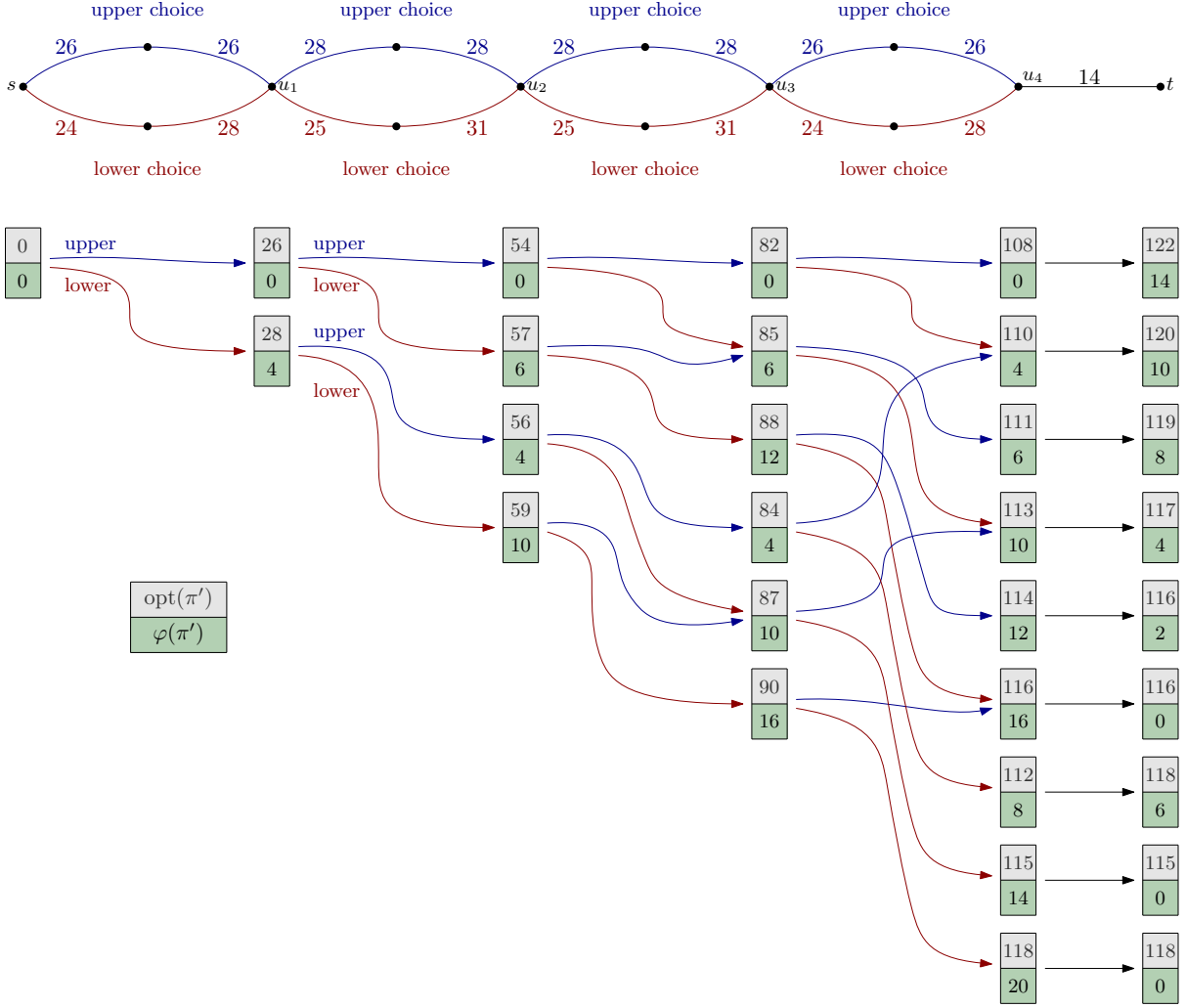}
		\caption{Top: The graph $G$ for $n=4$ with $a_1,\dots,a_4= 2,3,3,2$ and $b=7$,
			when we take $L=22$. 
			We have to decide whether there is an assignment of power with cost 
			$n L +2\sum_i a_i + b=115$ that activates some $s$-$t$ path.
			Bottom: pairs $(\opt(\pi'),\varphi(\pi'))$ 
			for all the $s$-$u_i$ paths $\pi'$.}
		\label{fig:installation3}
	\end{figure}

	Consider now a path $\pi$ from $s=u_0$ to $u_n$.
	Let $I$ be the set of indices $i\in [n]$ where the path takes the lower choice at $i$.
	From the previous discussion and a simple induction we have
	\[
		\opt(\pi) ~=~ \sum_{i\in [n]\setminus I} (L+2a_i) 
						+ \sum_{i\in I} \left(L+ 3 a_i\right)
				~=~ nL + \sum_{i\in [n]} 2 a_i + \sum_{i\in I} a_i 
	\]
	and
	\[
		\varphi(\pi) ~=~ 2\sum_{i\in I} a_i ~\le~ L.
	\]
	Since all the paths from $s$ to $u_n$ must follow the upper or lower choice at each $i\in [n]$,
	the result follows.	
\end{proof}

\begin{lemma}
\label{lem:implication}
	For any real numbers $A$ and $B$ we have
	\[
		A + \max\{ 2B-2A,0\} \le B ~~~\Longrightarrow ~~~ A=B.
	\]
\end{lemma}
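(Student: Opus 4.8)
The plan is a direct case analysis on the sign of $2B-2A$, i.e.\ on whether $B\le A$ or $B>A$; these two cases are exhaustive, which is the only completeness point worth checking.

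In the first case, $B\le A$, the maximum vanishes, $\max\{2B-2A,0\}=0$, so the hypothesis $A+\max\{2B-2A,0\}\le B$ simplifies to $A\le B$. Together with $B\le A$ this yields $A=B$, as claimed.

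In the second case, $B>A$, we have $\max\{2B-2A,0\}=2B-2A$, so the hypothesis reads $A+(2B-2A)\le B$, that is, $2B-A\le B$, equivalently $B\le A$ --- contradicting the standing assumption $B>A$. Hence the second case cannot occur, and in all cases we conclude $A=B$.

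There is essentially no obstacle here: the statement is an elementary inequality manipulation. The only things to be careful about are that the two cases genuinely cover all real values of $A,B$ and that one correctly unfolds the maximum in each case before simplifying the inequality.
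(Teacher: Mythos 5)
Your proof is correct and takes essentially the same approach as the paper: a two-case analysis on the sign of $B-A$ to unfold the maximum, deriving $A=B$ in one case and a contradiction in the other. The only cosmetic difference is which of the two cases produces the equality and which produces the contradiction (the paper starts from $A\le B$, you start from $B\le A$), but the argument is the same.
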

\begin{proof}
	If $A\le B$, then $B-A\ge 0$ and the assumption implies $A + (2B-2A) \le B$,
	which implies $B\le A$, and thus $A=B$.
	If $A>B$, then $B-A<0$ and the assumption implies $A + 0 \le B$,
	which implies $A\le B$. Thus this cannot happen.
\end{proof}

\begin{theorem}
	The problem \textsc{Minimum Installation Path} is NP-hard.
\end{theorem}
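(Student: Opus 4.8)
The plan is to give a polynomial-time reduction from \textsc{Subset Sum}, using the graph $G=G(a_1,\dots,a_n,b)$ and weight function $w$ constructed above. First I would fix the constant concretely, taking $L=2\sum_{i\in[n]}a_i+1$ so that it is an integer strictly larger than $2\sum_{i\in[n]}a_i$ as required, and then set the budget
\[
  C ~=~ nL + 2\sum_{i\in[n]} a_i + b .
\]
The reduction runs in polynomial time: $G$ has $O(n)$ vertices and edges, and all edge weights as well as $C$ are integers of magnitude $O(\sum_i a_i)$, hence of polynomial bit length. This is exactly where the weakness of the hardness result appears — the numbers may be exponentially large as integers but their binary encodings are short.

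The core claim to establish is that $(a_1,\dots,a_n,b)$ is a positive \textsc{Subset Sum} instance if and only if \textsc{Minimum Installation Path}$(G,w,s,t,C)$ has a positive answer. By Lemma~\ref{le:path} the decision problem has a positive answer precisely when $\min_{\pi'}\opt(\pi')\le C$, the minimum being over all $s$-$t$ paths $\pi'$: indeed, a power assignment of cost at most $C$ activating some $s$-$t$ path $\pi'$ shows $\opt(\pi')\le C$, and conversely, if $\opt(\pi')\le C$ then $p^*_{\pi'}$ is a feasible solution of cost $\opt(\pi')\le C$. Every $s$-$t$ path $\pi'$ consists of a path $\pi$ from $s$ to $u_n$ followed by the edge $u_nt$ of weight $2b$, so Lemma~\ref{le:prolong} gives $\opt(\pi')=\opt(\pi)+\max\{0,\,2b-\varphi(\pi)\}$.

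Next I would plug in the values supplied by Lemma~\ref{le:feasible}: for the index set $I\subseteq[n]$ selected by $\pi$ we have $\varphi(\pi)=2\sum_{i\in I}a_i$ and $\opt(\pi)=nL+2\sum_{i\in[n]}a_i+\sum_{i\in I}a_i$, and every $I$ is attained by some such $\pi$. Cancelling the common term $nL+2\sum_{i\in[n]}a_i$, the inequality $\opt(\pi')\le C$ becomes
\[
  \sum_{i\in I}a_i + \max\Bigl\{\,2b - 2\sum_{i\in I}a_i,\, 0\,\Bigr\} ~\le~ b ,
\]
which by Lemma~\ref{lem:implication} (with $A=\sum_{i\in I}a_i$ and $B=b$) holds if and only if $\sum_{i\in I}a_i=b$. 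Both directions then follow: if some $I$ has $\sum_{i\in I}a_i=b$ then the corresponding path $\pi'$ satisfies $\opt(\pi')=C$, so the answer is positive; and if the answer is positive then the set $I$ read off from an optimal $s$-$t$ path $\pi'$ satisfies $\sum_{i\in I}a_i=b$. Since \textsc{Subset Sum} is NP-hard, this proves that \textsc{Minimum Installation Path} is NP-hard.

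I do not expect a genuine obstacle, since all the structural work has been packaged into Lemmas~\ref{le:path}, \ref{le:prolong}, \ref{le:feasible} and \ref{lem:implication}. The one point deserving care is the calibration of the last edge weight $2b$ and of the budget $C$: a path reaching $u_n$ with leftover power $r$ must pay exactly $\max\{0,2b-r\}$ on the final edge, and the total can meet the budget $C$ only when $r=2b$, which is precisely the tightness exploited by Lemma~\ref{lem:implication}; one should check, as is immediate, that the common term $nL+2\sum_i a_i$ prevents any path from undershooting $C$ for some other reason. As a minor remark, one could additionally note that the problem lies in NP, using $s$-$t$ paths as certificates together with Lemma~\ref{le:integral} and the greedy assignment to verify feasibility over the rationals, but this is not needed for the stated theorem.
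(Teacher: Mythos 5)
Your proof is correct and follows essentially the same route as the paper's: reduce from \textsc{Subset Sum} using the graph $G(a_1,\dots,a_n,b)$ with budget $C=nL+2\sum_i a_i+b$, decompose any $s$-$t$ path as a prefix to $u_n$ followed by the final edge $u_nt$, invoke Lemmas~\ref{le:path}, \ref{le:prolong} and~\ref{le:feasible} to write $\opt(\pi')$ in terms of the index set $I$, and close with Lemma~\ref{lem:implication}. The only (cosmetic) difference is that you first rephrase the decision problem as $\min_{\pi'}\opt(\pi')\le C$ and then treat both directions at once, whereas the paper argues the two implications explicitly, but the substance is identical.
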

\begin{proof}
	We show that the instance for \textsc{Subset Sum} has a positive 
	answer if and only if in the graph 
	$G=G(a_1,\dots,a_n,b)$ there is a power assignment with cost at most 
	$C:=nL+2\sum_{i\in [n]} a_i + b$
	that activates some path from $s$ to $t$.
	
	Assume that the exists a solution for the instance to the \textsc{Subset Sum} problem.
	This means that we have some $I\subseteq [n]$ such that $\sum_{i\in I}a_i =b$.
	Because of Lemma~\ref{le:feasible}, there exists a path $\pi$ from $s=u_0$ to $u_n$
	with optimal installation cost $\opt(\pi)=nL + 2 \sum_{i\in [n]} a_i + \sum_{i\in I} a_i =C$ 
	and $\varphi(\pi)=2b$. Because of Lemma~\ref{le:path},
	this means that the power assignment $p^*_\pi$
	has cost $\cost(p^*_\pi)=C$, activates all edges of $\pi$,
	and assigns power $p^*_\pi(u_n)=\varphi(\pi)=2b$ to vertex $u_n$.
	Such power assignment $p^*_\pi$ also activates the edge $u_nt$ because 
	it has weight $2b= p^*_\pi(u_n)$. (In particular, the vertex $t$ gets power $0$.)
	
	Assume now that there is a power assignment $p'\ge 0$ with cost at most $C$
	that activates a path $\pi'$ from $s$ to $t$.
	Let $\pi$ be the restriction of $\pi'$ from $s$ to $u_n$.
	Because of Lemma~\ref{le:prolong} and using that the power assignment $p'$ 
	activates $\pi'$, we have 
	\begin{equation}
		\opt(\pi) + \max \{ 2b-\varphi(\pi), 0 \} ~=~ \opt(\pi') ~\le~ \cost(p') ~\le~ C .
	\label{eq:pi}
	\end{equation}
	Because of Lemma~\ref{le:feasible}, there exists some $I\subseteq [n]$ such that
	\[
		\opt(\pi) ~=~ nL + 2\sum_{i\in [n]} a_i + \sum_{i\in I} a_i 
		~~~~~\text{and}~~~~~
		\varphi(\pi)~=~2\sum_{i\in I} a_i.
	\]
	Substituting in \eqref{eq:pi}, for such $I\subseteq [n]$ we have
	\[
		 nL + 2 \sum_{i\in [n]} a_i + \sum_{i\in I} a_i  + \max \left\{ 2b- 2\sum_{i\in I} a_i, 0 \right\}
             ~\leq~ C ~=~ nL+ 2 \sum_{i\in [n]} a_i + b.
	\]
	This means that 
	\[ 
		\sum_{i\in I} a_i  + \max \left\{ 2b-2\sum_{i\in I} a_i, 0 \right\} ~\leq~ b.
	\]
	Because of Lemma~\ref{lem:implication} we conclude that $\sum_{i\in I} a_i = b$, 
	and the given instance to \textsc{Subset Sum} problem has a solution.
\end{proof}

\section{Minimum Barrier Shrinkage}
\label{sec:barrier}

In this section, we show that the \textsc{Minimum Barrier Shrinkage} problem
is NP-hard. The structure of the proof is very similar to the 
proof given in Section~\ref{sec:reduction} for the NP-hardness of the problem
\textsc{Minimum Installation Path}.

We first give the construction assuming that we can compute algebraic
numbers to infinite precision. Then we explain how an approximate construction
with enough precision suffices and can be computed in polynomial time.
 
\medskip

The \DEF{penetration depth} of a pair $(D(c,r),D(c',r'))$ of open disks
$D(c,r)$ and~$D(c',r')$
is $r+r'-|c-c'|$, where $|c-c'|$ is the distance between the centers $c$ and $c'$.
When no disk contains the center of the other disk, and they intersect,
then the intersection $D(c,r) \cap D(c',r')$ is a lens of width
equal to the penetration depth. See Figure~\ref{fig:barrier_penetration}.
If we shrink the disks to $D(c,r-\delta)$ and $D(c',r'-\delta')$,
the disks intersect if and only if $\delta+\delta'$ is strictly
smaller than the penetration depth. (Recall that disks are taken as open sets.)
Thus, the penetration depth equals
the minimum total shrinking of the disks so that a curve can 
pass between the two disks. 

\begin{figure}[tb]
	\centering
	\includegraphics[scale=1,page=7]{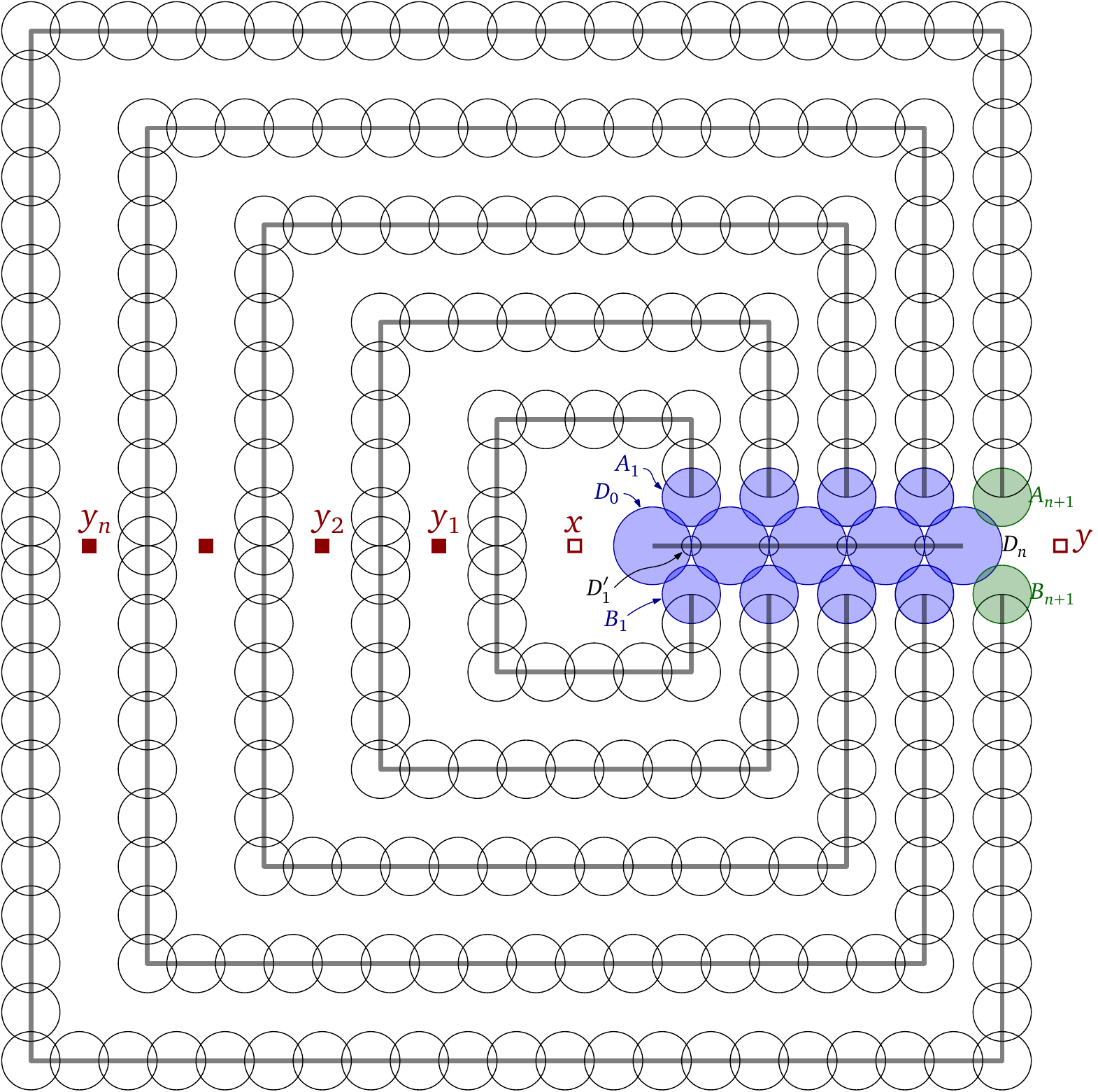}
	\caption{The penetration depth of the pair of drawn disks is the length of the arrow.}
	\label{fig:barrier_penetration}
\end{figure}

\medskip

We reduce again from \textsc{Subset Sum}.
Consider an instance $I$ of \textsc{Subset Sum} given by
a sequence $a_1,\dots,a_n$ of positive integers and a positive integer $b$.
Set $L$ to be an integer strictly larger than $2\sum_{i\in [n]} a_i$. 
Then, for each $I\subseteq [n]$ we have $2\sum_{i\in I} a_i< L$.
Set $C=nL+2\sum_{i\in [n]} a_i+b$ and $\lambda = 10 C$.
We will construct an instance to \textsc{Minimum Barrier Shrinkage} problem
such that it has a solution if and only if the instance $I$ for 
\textsc{Subset Sum} has a solution.

Figure~\ref{fig:barrier_overview} shows the overall idea of the construction.
Most of the action is happening around the filled (blue and green)
disks.
The remaining white disks create corridors to communicate from one side to the other 
of the filled disks.
To provide a feasible solution of cost at most~$C$, we have to indicate how to 
shrink the disks for a total radius of at most~$C$ and provide an $x$-$y$ curve
in the plane that does not touch the (interior of the) shrunken disks.

\begin{figure}[tb]
	\centering
	\includegraphics[scale=.9,page=1]{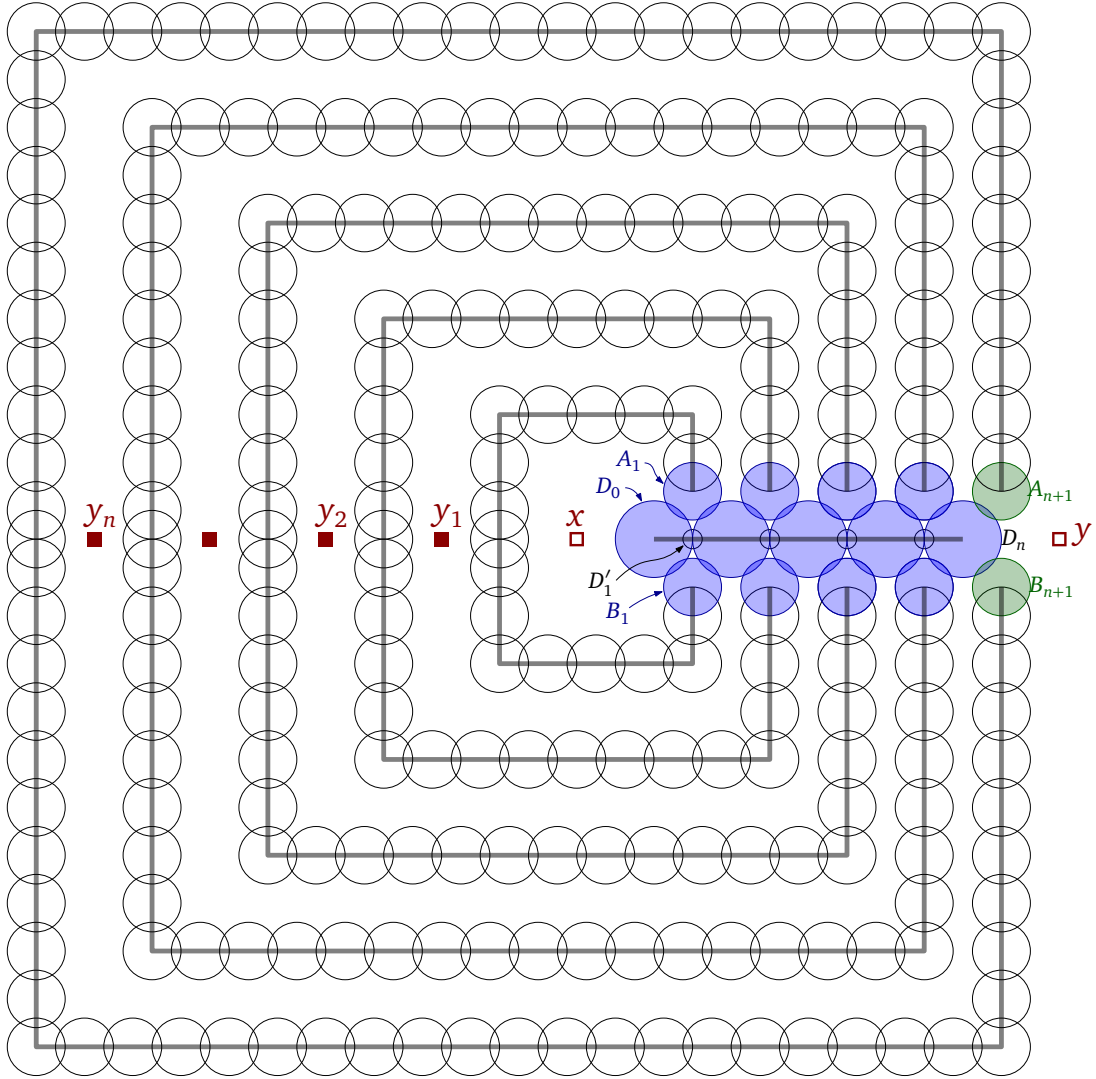}
	\caption{Basic idea of the construction for $n=4$. All the shrinking of disks 
		and the decisions on how to route the $x$-$y$ curve are happening around the 
		(blue and green) filled disks. The thick lines will not be crossed by any $x$-$y$ curve
		that is disjoint from the shrunken disks in a solution with the desired cost.}
	\label{fig:barrier_overview}
\end{figure}

In our construction, no point of the plane will be covered by more than two disks.
In such a case, the $x$-$y$ curve can be described combinatorially 
by a sequence of pairs of disks such that, for each pair $(D,D')$,
the curve passes between $D$ and $D'$, after shrinking.

If the penetration depth of two disks is at least $\lambda=10 C$, 
then, in any shrinking of the disks with total cost at most~$C$, those two disks keep intersecting,
which means that we cannot route the $x$-$y$ curve between those two disks.
More precisely, the segment connecting the centers of such disks cannot
be crossed by the $x$-$y$ curve.
In the drawings we indicate this with a thick segment connecting
the centers of the disks.

\begin{figure}[tb]
	\centering
	\includegraphics[scale=.9,page=6]{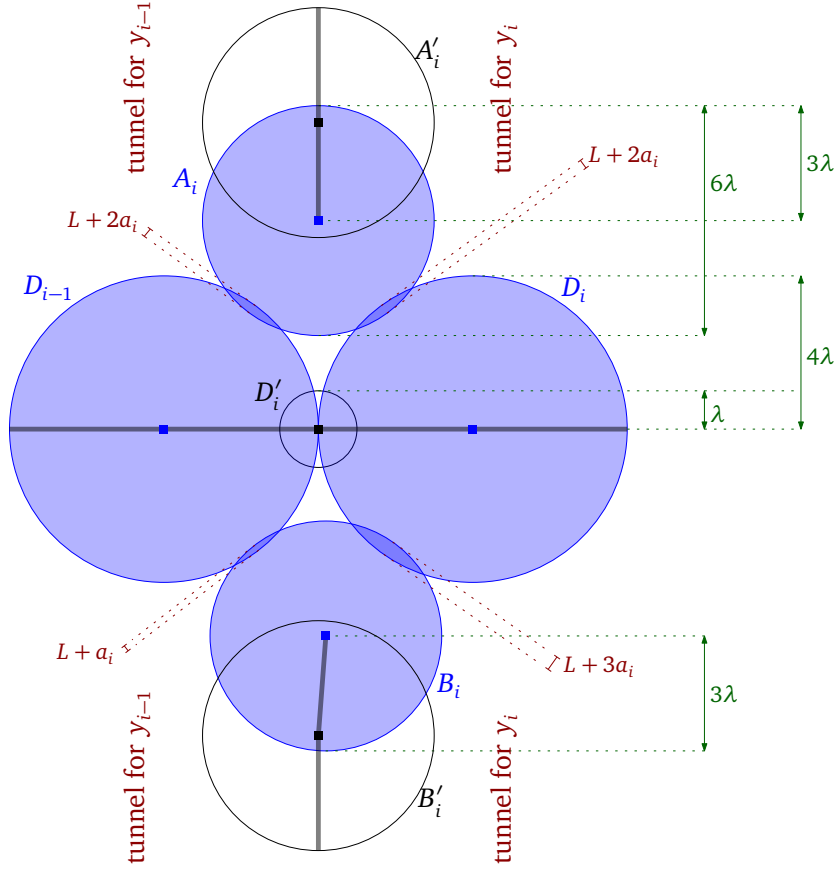}
	\caption{Block $\BB_i$ for $1<i<n$; the penetration depths are not to scale. Note that $A_i$ has the same
		overlap with $D_{i-1}$ and $D_i$, while $B_i$ is moved closer to $D_i$. 
		The center of $B_i$ is to the right of the (vertical) 
		line through the centers of $A'_i$, $A_i$, $D'_i$ and $B'_i$.}
	\label{fig:barrier_block}
\end{figure}

The main part to encode the instance, around the filled disks, consists of the following disks.
See Figures~\ref{fig:barrier_block} and~\ref{fig:barrier_end}.
\begin{itemize}
	\item For $i=0,\dots,n$, a disk $D_i$ of radius $4\lambda$ centered at $((4\lambda)\cdot 2i,0)$;
	\item for each $i\in [n]$, a disk $D'_i$ of radius $\lambda$ centered at 
		$((4\lambda)\cdot (2i-1),0)$;
	\item for each $i\in [n]$, a disk $A_i$ (for \emph{above}) of radius $3\lambda$ placed
		such that the center is above the $x$-axis, and
		the penetration depth of $(A_i,D_{i-1})$ and $(A_i,D_i)$ is $L+2a_i$;
		this means that the distance between $\Center(A_i)$ and $\Center(D_{i-1})$ 
		is $7\lambda- (L+2a_i)$, and the distance between $\Center(A_i)$ and $\Center(D_i)$ 
		is $7\lambda- (L+2a_i)$;
	\item for each $i\in [n]$, a disk $B_i$ (for \emph{below}) of radius $3\lambda$ placed
		such that the center is below the $x$-axis,  
		the penetration depth of $(B_i,D_{i-1})$ is $L+a_i$ and
		the penetration depth of $(B_i,D_i)$ is $L+3a_i$;
		this means that the distance between $\Center(B_i)$ and $\Center(D_{i-1})$ is $7\lambda- (L+a_i)$
		and the distance between $\Center(B_i)$ and $\Center(D_i)$ is $7\lambda- (L+3a_i)$;
	\item a disk $A_{n+1}$ of radius $3\lambda$ placed such that the center is above the $x$-axis,  
		the $x$-coordinate of $\Center(A_{n+1})$ is $(4\lambda)\cdot (2n+1)$,
		and the penetration depth of $(A_{n+1},D_n)$ is $2b$; $A_{n+1}$ is one of the green disks
		in the figures; 
	\item a disk $B_{n+1}$ of radius $3\lambda$ placed such that the center is below the $x$-axis,  
		the $x$-coordinate of $\Center(B_{n+1})$ is $(4\lambda)\cdot (2n+1)$,
		and the penetration depth of $(B_{n+1},D_n)$ is $2b$; $B_{n+1}$ is another of the green disks in the figures;
	\item for each $i\in [n+1]$, a disk $A'_i$ of radius $3\lambda$ 
		centered at $((4\lambda)\cdot (2i-1),8\lambda)$ and 
		a disk $B'_i$ of radius $3\lambda$ 
		centered at $((4\lambda)\cdot (2i-1),-8\lambda)$.
\end{itemize}

\begin{figure}[tb]
	\centering
	\includegraphics[scale=.8,page=8]{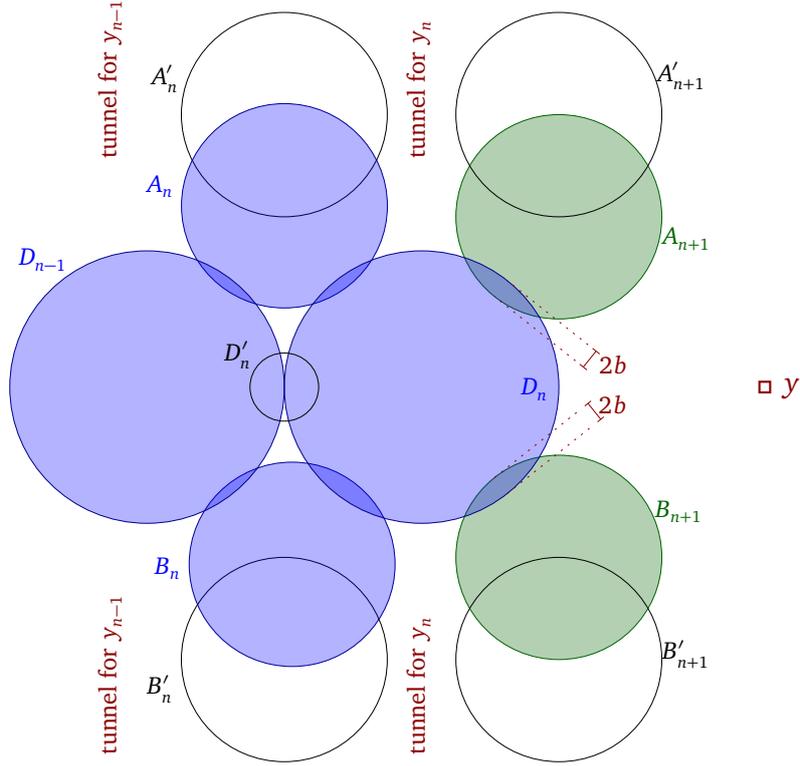}
	\caption{The blocks $\BB_n$ and $\BB_{n+1}$; the penetration depths are not to scale.}
	\label{fig:barrier_end}
\end{figure}

For $i\in [n]$, the \DEF{block} $\BB_i$ consists of the disks 
$D_{i-1}$, $D_{i}$, $D'_i$, $A_i$, $A'_i$, $B_i$ and $B'_i$.
We also define the block $\BB_{n+1}$ as the group of disks $D_n$, $A_{n+1}$, $A'_{n+1}$, $B_{n+1}$ and $B'_{n+1}$.
Note that the blocks $\BB_i$ and $\BB_{i+1}$, for $i\in [n]$, share the disk $D_i$.

For each $i\in [n+1]$, we make a \emph{path} of disks of radius $3\lambda$, starting from
$A'_i$ and finishing with $B'_i$, where any two consecutive disks have penetration
depth at least $3\lambda$. The disks in these paths are pairwise disjoint for different indices $i$,
and disjoint from the rest of the construction. The disks in each such path can be 
centered along a $5$-link axis-parallel path, and it uses $O(i)$ disks.
See Figure~\ref{fig:barrier_overview}.
We denote the path of disks for the index $i\in [n+1]$ by $\Pi_i$.
For later use, we place a point $y_i$ in the ``tunnel'' between the paths $\Pi_i$ and $\Pi_{i+1}$.
See Figure~\ref{fig:barrier_overview}.

\begin{lemma}
\label{lem:barrier1}
	For each $i\in [n]$,
	the disks $D'_i$, $A_i$ and $B_i$ are pairwise disjoint.
	Moreover, the penetration depth of the pairs $(D_{i-1},D'_i)$, $(D_i,D'_i)$,
	$(A_i,A'_i)$ and $(B_i,B'_i)$ is at least $\lambda$.
	For $\BB_{n+1}$, the disks $A_{n+1}$ and $B_{n+1}$ are disjoint and
	the penetration depth of the pairs $(A_{n+1},A'_{n+1})$ and $(B_{n+1},B'_{n+1})$ is at least $\lambda$.
\end{lemma}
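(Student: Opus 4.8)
The plan is to write down exact coordinates for every center appearing in the statement and then verify each assertion as an elementary comparison between the distance of two centers and the sum (or difference) of their radii. Everything is driven by the fact that $\lambda=10C$ is astronomically larger than the interesting parameters: since $C=nL+2\sum_j a_j+b\ge L$ and $2\sum_j a_j<L$, we get $L\le\lambda/10$, each $a_i<L/2\le\lambda/20$, $b\le C=\lambda/10$, and hence each of the quantities $L+a_i$, $L+2a_i$, $L+3a_i$, $2b$ that the construction uses as a penetration depth lies strictly between $0$ and $\lambda/4$. This in particular makes the construction meaningful: the prescribed distance $7\lambda-(L+2a_i)$ from $\Center(A_i)$ to each of $\Center(D_{i-1}),\Center(D_i)$ exceeds $4\lambda$, half of $|\Center(D_{i-1})-\Center(D_i)|$, so a point $\Center(A_i)$ above the $x$-axis with these distances exists; and the circles of radii $7\lambda-(L+a_i)$ and $7\lambda-(L+3a_i)$ around $\Center(D_{i-1})$ and $\Center(D_i)$ meet below the $x$-axis, since $2a_i\le 8\lambda\le 14\lambda-2L-4a_i$.

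First I would fix coordinates. With $\Center(D_{i-1})=(8(i-1)\lambda,0)$ and $\Center(D_i)=(8i\lambda,0)$ (so these centers are $8\lambda$ apart), the center $\Center(D'_i)=((8i-4)\lambda,0)$ is their midpoint, and $\Center(A'_i)=((8i-4)\lambda,8\lambda)$, $\Center(B'_i)=((8i-4)\lambda,-8\lambda)$ sit directly above and below it. Since $A_i$ is equidistant from $\Center(D_{i-1})$ and $\Center(D_i)$, its center lies on their perpendicular bisector: $\Center(A_i)=((8i-4)\lambda,h_i)$ with $h_i=\sqrt{(7\lambda-L-2a_i)^2-(4\lambda)^2}$, and the parameter bounds give $h_i\in(5\lambda,6\lambda)$. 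For $B_i$ the two prescribed distances differ, so one must actually intersect the two circles above; subtracting their equations gives $\Center(B_i)=((8i-4)\lambda+\sigma_i,-k_i)$ with a small rightward shift $\sigma_i=a_i(7\lambda-L-2a_i)/(4\lambda)\in(0,\lambda/10)$ and, again from the bounds, $k_i\in(5\lambda,6\lambda)$. The same computation for $\BB_{n+1}$ yields $\Center(A_{n+1})=((8n+4)\lambda,h)$ and $\Center(B_{n+1})=((8n+4)\lambda,-h)$ with $h=\sqrt{(7\lambda-2b)^2-(4\lambda)^2}\in(5\lambda,6\lambda)$, lying directly below $\Center(A'_{n+1})$ and above $\Center(B'_{n+1})$ respectively.

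Each conclusion is now immediate. Disjointness: $|\Center(D'_i)-\Center(A_i)|=h_i>5\lambda>4\lambda=\lambda+3\lambda$; $|\Center(D'_i)-\Center(B_i)|\ge k_i>5\lambda>4\lambda$; $|\Center(A_i)-\Center(B_i)|\ge h_i+k_i>10\lambda>6\lambda$; and $|\Center(A_{n+1})-\Center(B_{n+1})|=2h>10\lambda>6\lambda$. Penetration depths: for $(D_{i-1},D'_i)$ and $(D_i,D'_i)$ the centers are at distance exactly $4\lambda$, so the depth is $4\lambda+\lambda-4\lambda=\lambda$; for $(A_i,A'_i)$ the centers are aligned at distance $8\lambda-h_i$, so the depth is $6\lambda-(8\lambda-h_i)=h_i-2\lambda>3\lambda$; for $(B_i,B'_i)$ the distance is $\sqrt{\sigma_i^2+(8\lambda-k_i)^2}<4\lambda$, so the depth exceeds $2\lambda$; and for $(A_{n+1},A'_{n+1})$ and $(B_{n+1},B'_{n+1})$ the depth equals $h-2\lambda>3\lambda$ exactly as for $(A_i,A'_i)$. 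All of these are $\ge\lambda$.

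The only step with any real content is the explicit location of $\Center(B_i)$, the one disk whose two prescribed penetration depths differ, which requires solving the intersection of two circles; everything else is a matter of reading off coordinates and comparing a distance with a sum of radii, and every such comparison has slack bounded away from the trivial threshold precisely because $\lambda$ was chosen so much larger than $L$, the $a_i$, and $b$. I do not expect any genuine obstacle.
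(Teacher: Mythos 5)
Your proof is correct, and it is genuinely more explicit than the paper's.  The paper argues by comparison with reference disks $\tilde A_i,\tilde B_i$ placed at height $\pm5\lambda$: it computes the penetration depth of $(\tilde B_i,D_{i-1})$ to be $(7-\sqrt{41})\lambda\approx0.597\lambda$, observes that $B_i$'s much smaller penetration depths force $\Center(B_i)$ into a small region further from the $x$-axis, and then reads off the conclusions (disjointness from $D'_i$; $\Center(B_i)\in B'_i$, hence penetration depth $\ge3\lambda$) from a figure, remarking that they can be ``proved by a slightly involved computation''.  You instead carry out that computation directly: you locate every center exactly, including solving the two-circle intersection that determines $\Center(B_i)=((8i-4)\lambda+\sigma_i,-k_i)$ with $\sigma_i=a_i(7\lambda-L-2a_i)/(4\lambda)$, derive the containment $h_i,k_i\in(5\lambda,6\lambda)$ and $\sigma_i<\lambda/10$ from the size hierarchy $a_i<L/2<\lambda/20$ and $L,b<\lambda/10$, and verify each disjointness and penetration-depth claim as a one-line comparison of a distance against a sum or difference of radii.  (All your numeric estimates check out; your slightly weaker ``$>2\lambda$'' for $(B_i,B'_i)$ versus the paper's ``$\ge3\lambda$'' is still amply sufficient for the stated bound of $\lambda$.)  What the paper's route buys is a visual, nearly calculation-free argument well suited to a figure; what your route buys is a self-contained, fully rigorous verification with no appeal to pictures, which is the natural way to close the gap the paper itself acknowledges.
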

\begin{proof}
	We consider only the case $i\in [n]$. The arguments for $\BB_{n+1}$ are similar.
	The penetration depth of the pairs $(D_{i-1},D'_i)$ and $(D_i,D'_i)$ is $\lambda$
	by construction.
	
	Consider the disk $\tilde A_i$ of radius $3\lambda$ centered at 
	$((4\lambda)\cdot (2i-1),5\lambda)$ and the disk $\tilde B_i$ of radius $3\lambda$
	centered at $((4\lambda)\cdot (2i-1),-5\lambda)$.
	See Figure~\ref{fig:tilde_version}. 
	We will compare $B_i$ to $\tilde B_i$; note that they have the same size,
	just a different placement. The argument for $A_i$ is the same.

	\begin{figure}[tb]
		\centering
		\includegraphics[scale=.8,page=4]{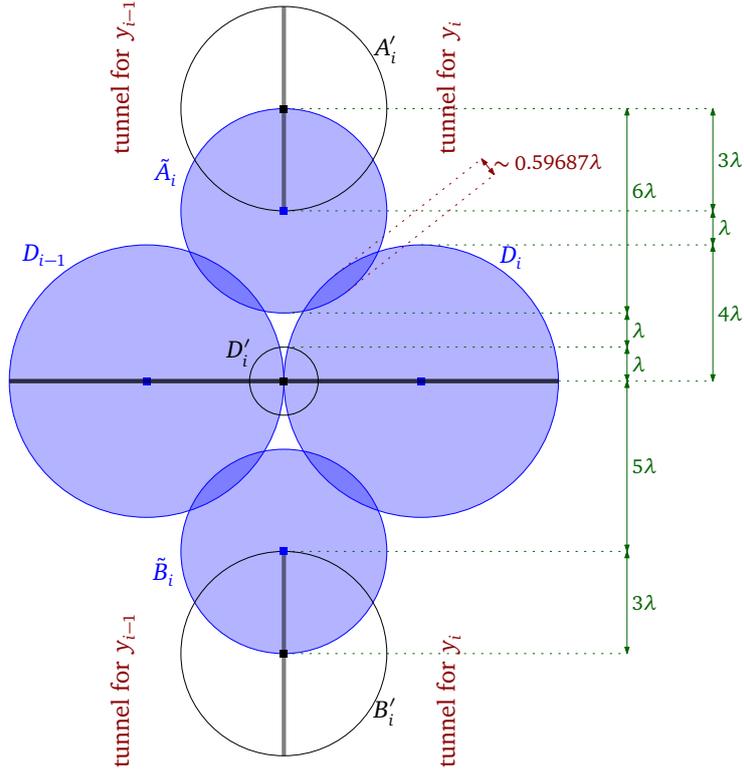}
		\caption{Disks $\tilde A_i$ and $\tilde B_i$ considered in the 
			proof of Lemma~\ref{lem:barrier1}.}
		\label{fig:tilde_version}
	\end{figure}

	The penetration depth of the pairs $(\tilde B_i,D_{i-1})$ and $(\tilde B_i,D_i)$
	is
	\[
		3\lambda+4\lambda - \sqrt{(5\lambda)^2+(4\lambda)^2} ~=~ 
		(7-\sqrt{41})\lambda ~\approx ~ 0.59687\lambda,
	\]
	while the penetration depth of $(\tilde B_i,B'_i)$ is exactly $3\lambda$.
	The disk $D'_i$ is at distance $\lambda$ from $\tilde B_i$.
	
	Since the penetration depth of $(B_i,D_{i-1})$ and $(B_i,D_i)$ is
        at most $L+3a_i \le C =\lambda/10$, these penetration depths are
        smaller than the penetration depths of $(\tilde B_i,D_{i-1})$ and
        $(\tilde B_i,D_i)$, namely, between 0 and $0.59687\lambda$.  See
        Figure~\ref{fig:tilde_version2}.  As can be seen on the figure (and
        proved by a slightly involved computation), this implies that $B_i$
        and $D'_i$ are disjoint, and that the disk $B'_i$ contains the
        center of $B_i$.  The latter fact implies that the penetration
        depth of $(B_i,B'_i)$ is at least $3\lambda$.
        \begin{figure}[tb]
        \centering
        \includegraphics[scale=.8,page=5]{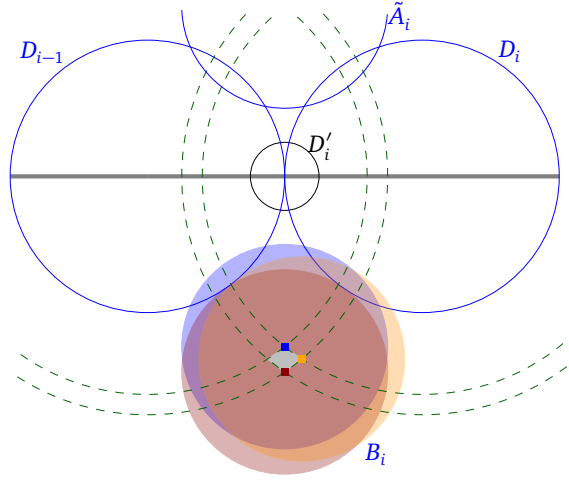}
        \caption{The gray region shows the position of the centers where the disk
        	$B_i$ may be placed, more precisely, the positions for $\Center(B_i)$ where
        	the penetration depth of $(B_i,D_{i-1})$ and $(B_i,D_i)$ lies in
        	the interval $[0, 0.59687\lambda]$.
        	The blue mark denotes the center of $\tilde B_i$.}
        \label{fig:tilde_version2}
      \end{figure}
\end{proof}

From Lemma~\ref{lem:barrier1} we conclude that, in any solution with cost under $\lambda= 10 C$,
the $x$-$y$ curve cannot cross the segments connecting $\Center(D_{i-1})$ and $\Center(D_i)$,
the segments connecting $\Center(A_i)$ and $\Center(A'_i)$, nor the segments connecting 
$\Center(B_i)$ and $\Center(B'_i)$, for each $i\in [n+1]$. Furthermore,
it cannot cross the path $\Pi_i$ connecting $A'_i$ to $B'_i$, for each $i\in [n+1]$.  
This implies that, at each block $\BB_i$, we have to decide whether the $x$-$y$ curve goes
above (crossing $A_i$ before shrinking) or below (crossing $B_i$ before shrinking). 
See Figure~\ref{fig:barrier2} for one such choice.

\begin{figure}[tb]
	\centering
	\includegraphics[scale=.9,page=2]{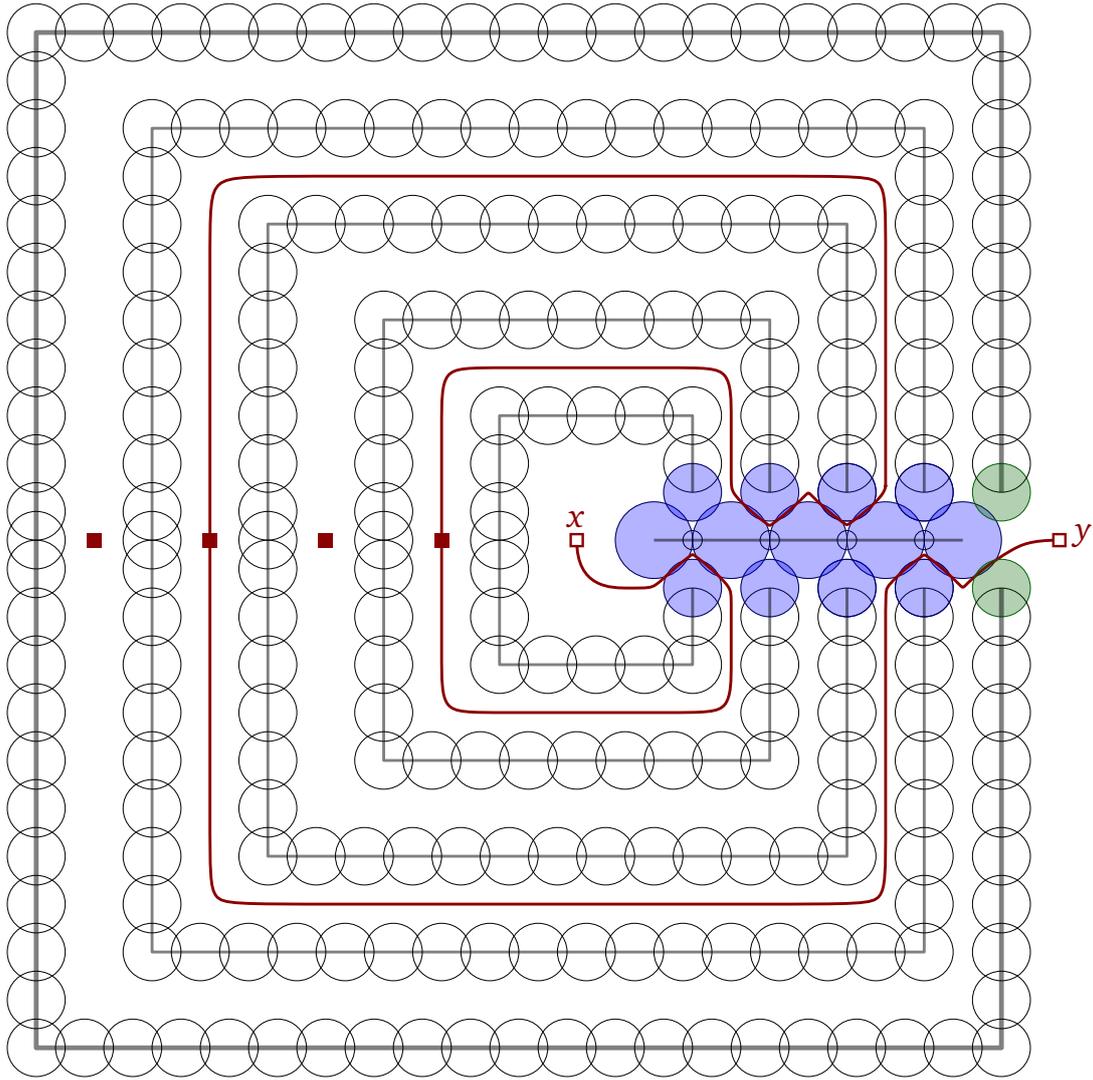}
	\caption{The red $x$-$y$ curve shows the type of decisions that have to be made
		to make a feasible solution. In this example, we have to decide 5 times
		independently whether the $x$-$y$ curve is routed above or below.
		Note that the curve can be routed to pass through each $y_i$, if desired.}
	\label{fig:barrier2}
\end{figure}

So, in a nutshell, the strategy is to reformulate the problem in terms of graphs, and to note
that the instance is equivalent to the \textsc{Minimum Installation Path} in
that graph.  Let $X_i=A_i$ or $X_i=B_i$, depending on the choice of how to route the $x$-$y$ curve.
If $X_i=A_i$, then the $x$-$y$ curve, after shrinking the disks, passes
between $D_{i-1}$ and $A_i$, and also between $D_i$ and $A_i$.
If $X_i=B_i$, then the $x$-$y$ curve, after shrinking the disks, passes
between $D_{i-1}$ and $B_i$, and also between $D_i$ and $B_i$.
Note that we can assume that the $x$-$y$ curve passes between two disks at most once.
Moreover, for each disk $D$, the $x$-$y$ curve passes between $D$ and another disk
at most twice.
Once we decide the combinatorial routing of the $x$-$y$ curve, that is, once
we select $X_1,\dots,X_n,X_{n+1}$,
then greedily shrinking the disks gives an optimal solution, similarly to Lemma~\ref{le:path}:
it pays off to push the shrinking towards disks that are crossed later by the $x$-$y$ curve.
That is, to pass between $D_1$ and $X_1$, it pays off to do not shrink $D_1$, 
as it is never crossed again, and shrink $X_1$ just enough to pass in between.
Similarly, it pays off to shrink $D_2$ to pass between $D_2$ and $X_1$, because $X_1$ will not
be crossed again later on.
In general, to pass between $D_{i-1}$ and $X_i$ it pays off to reduce $X_i$ just enough to pass between them,
taking into account how much $D_{i-1}$ was already shrunken,
and to pass between $X_i$ and $D_i$ it pays off to reduce $D_i$ just enough to pass between them,
taking into account how much $X_i$ was already reduced.

Let $\DD=\DD(I)$ be the set of all disks in the constructed instance.
\begin{lemma}
\label{le:equivalence}
	The instance $I=(a_1,\dots,a_n,b)$ to \textsc{Subset Sum} has a solution
	if and only if the instance $(\DD,x,y,C)$ to \textsc{Minimum
          Barrier Shrinkage} has a positive answer, where  $C=nL+2\sum_{i\in [n]} a_i+b$. 
Furthermore, for any $\alpha\in[0,1)$, \textsc{Minimum Barrier
          Shrinkage}$(\DD,x,y,C)$ has a positive answer if and only if \textsc{Minimum Barrier
          Shrinkage}$(\DD,x,y,C+\alpha)$ has a positive answer.
\end{lemma}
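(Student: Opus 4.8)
The plan is to convert each cheap feasible geometric solution into a power assignment on the graph $G=G(a_1,\dots,a_n,b)$ built in Section~\ref{sec:reduction}, and conversely, so that the statement reduces to the analysis of \textsc{Minimum Installation Path} on~$G$ already carried out there. Throughout we use that $C\ge 1$, hence $C<\lambda=10C$ and also $C+\alpha<\lambda$ for every $\alpha\in[0,1)$.

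\emph{Rigidity of cheap solutions.} First I would invoke Lemma~\ref{lem:barrier1} and the paragraph following it: in any shrinking of total cost at most~$C$, no $x$-$y$ curve disjoint from the shrunken disks can cross any of the thick segments of the construction (those between $\Center(D_{i-1})$ and $\Center(D_i)$, between $\Center(A_i)$ and $\Center(A'_i)$, and between $\Center(B_i)$ and $\Center(B'_i)$), nor any wall~$\Pi_i$. Combined with the placement of $x$ and~$y$, this forces such a curve to traverse the blocks $\BB_1,\dots,\BB_{n+1}$ in order; inside block~$\BB_i$ with $i\in[n]$ it must pick $X_i\in\{A_i,B_i\}$ and pass once between $D_{i-1}$ and~$X_i$ and once between $X_i$ and~$D_i$, while at~$\BB_{n+1}$ it passes between $D_n$ and one of $A_{n+1},B_{n+1}$ (both at penetration depth~$2b$, so the choice is immaterial to the cost); the remaining passages, through the corridors around the $\Pi_i$'s and the points~$y_i$, need no shrinking, since the disks involved may be kept at full radius. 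Conversely, for every choice $\chi=(X_1,\dots,X_n)\in\prod_{i\in[n]}\{A_i,B_i\}$ there is, once the disks are shrunk enough, an $x$-$y$ curve of this combinatorial type. Hence the feasible shrinkings $\delta\ge 0$ are exactly those for which some~$\chi$ opens all required gaps, i.e.\ $\delta_{D_{i-1}}+\delta_{X_i}$ and $\delta_{X_i}+\delta_{D_i}$ reach the corresponding penetration depths for each $i\in[n]$, and $\delta_{D_n}+\delta_{X_{n+1}}\ge 2b$ (recall that two open disks shrunk by $\delta,\delta'$ become disjoint as soon as $\delta+\delta'$ equals their penetration depth, so these optima are attained).

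\emph{Matching with~$G$.} These inequalities are exactly the activation conditions of the two-edge ``upper/lower'' gadget in~$G$: for $X_i=A_i$ the two penetration depths are $L+2a_i$ and~$L+2a_i$, for $X_i=B_i$ they are $L+a_i$ and~$L+3a_i$, and the last depth is~$2b$. Identifying $D_0$ with $s=u_0$, each~$D_i$ with~$u_i$, each~$X_i$ ($i\in[n]$) with the middle vertex of the corresponding gadget path, and $A_{n+1}$ (equivalently~$B_{n+1}$) with~$t$, the constraints above for a fixed~$\chi$ become precisely ``$p(u)+p(v)\ge w(uv)$ for every edge of the $s$-$t$ path~$P_\chi$ of~$G$ determined by~$\chi$''. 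Therefore, setting $\delta=0$ on all other disks and using Lemma~\ref{le:path}, the minimum total shrinking among solutions of combinatorial type~$\chi$ equals~$\opt(P_\chi)$; minimizing over~$\chi$, which ranges over precisely the $s$-$t$ paths of~$G$, shows that the minimum cost of a feasible \textsc{Minimum Barrier Shrinkage} solution on $(\DD,x,y)$ equals the optimum value of \textsc{Minimum Installation Path} on $(G,w,s,t)$.

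\emph{Conclusion and the expected obstacle.} The proof of NP-hardness of \textsc{Minimum Installation Path} (through Lemmas~\ref{le:feasible} and~\ref{lem:implication}) shows that~$I$ is a yes-instance of \textsc{Subset Sum} if and only if this common optimum is at most $C=nL+2\sum_{i\in[n]}a_i+b$, which by the previous step is if and only if $(\DD,x,y,C)$ is a yes-instance; and since all gadget weights $L+2a_i,L+a_i,L+3a_i,2b$ are integers and an optimal solution shrinks no other disk, this optimum is an integer (exactly as in Lemma~\ref{le:integral}), so it is at most~$C$ if and only if it is at most~$C+\alpha$ for every $\alpha\in[0,1)$, giving the robustness statement. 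I expect the rigidity step to be the real work: beyond quoting Lemma~\ref{lem:barrier1} for the ``walls'', one must verify that the global arrangement of the (unshrunk, or barely shrunk) disks admits no unintended route — no skipping a block, no reusing a gap beyond what is allowed, and a genuine free ``above/below'' switch at each block through the $\Pi_i$-corridors — which is a planar arrangement argument of the same flavor as the per-block computation already flagged as ``slightly involved'' in the proof of Lemma~\ref{lem:barrier1}, and is where the careful choice of radii and center positions has to be cashed in.
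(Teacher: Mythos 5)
Your proposal is correct and follows essentially the same approach as the paper: reformulate the geometric instance as an instance of \textsc{Minimum Installation Path} by identifying the shrinkage of each relevant disk with a power at a corresponding vertex (the paper explicitly builds an intermediate cell-adjacency graph~$G'$ and observes it equals $G(a_1,\dots,a_n,b)$ up to a doubled terminal edge, whereas you identify the disks directly with vertices of~$G$, but the bijection and the resulting equivalence of optima are the same). Both arguments invoke Lemma~\ref{lem:barrier1} for the rigidity of cheap solutions and derive the $\alpha$-robustness from the integrality argument of Lemma~\ref{le:integral}.
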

\begin{proof}
	We construct a graph $G'$ as follows.
	We make a node for each connected component of $\RR^2\setminus \bigcup \DD$ that
	may be crossed by the $x$-$y$ curve after shrinking disks for a cost strictly smaller than $\lambda=10C$.
	This means that we have the following nodes in the graph:
	\begin{itemize}
	\item a node for the cell containing $x$, which we call $x$ also; 
	\item a node for the cell containing $y$, which we call $y$ also; 
	\item a node called $\alpha_i$ for the region bounded between the disks $D_{i-1},D_i,A_i$ ($i\in [n]$);
	\item a node called $\beta_i$ for the region bounded between the disks $D_{i-1},D_i,B_i$ ($i\in [n]$);
	\item a node for the cell that contains $y_i$ ($i\in [n]$), 
		that is, the tunnel bounded by $\Pi_i$ and $\Pi_{i+1}$; we call the node $y_i$ also.
	\end{itemize}
	We put an edge between two nodes whenever we can pass from one region to the other
	passing between two disks with penetration strictly below $\lambda=10C$.
	See Figure~\ref{fig:graph} for the resulting graph, $G'$.
	This graph $G'$ is essentially the graph $G(a_1,\dots,a_n,b)$ used in Section~\ref{sec:reduction}.
	(The only difference is that, in $G'$, we have two parallel edges from $y_n$ to $y$, instead of a single edge.)

	\begin{figure}[tb]
		\centering
		\includegraphics[page=3]{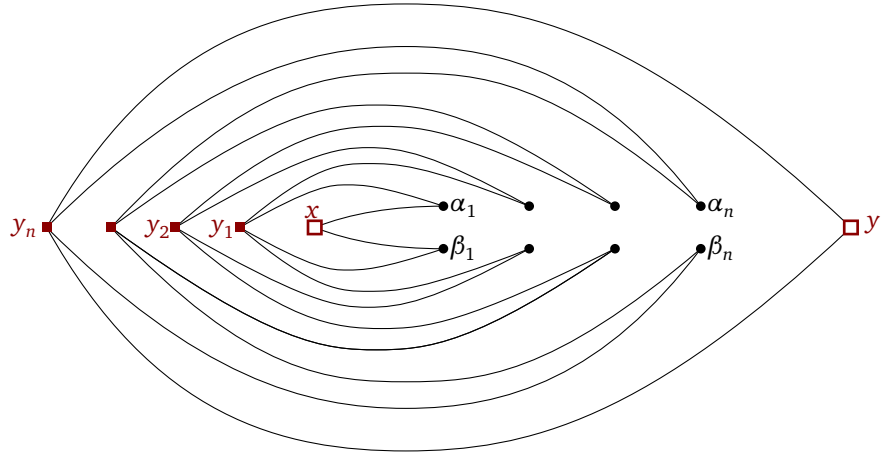}
		\caption{The combinatorially different $x$-$y$ curves can be encoded in a graph, denoted $G'$. 
			This is essentially the same graph $G(a_1,\dots,a_n,b)$ used in Section~\ref{sec:reduction},
			but with a different drawing; see Figure~\ref{fig:installation1}.}
		\label{fig:graph}
	\end{figure}

	We assign a weight to each edge of $G'$ equal to the penetration depth of the pair of
	disks that separate the cell. For example, the edges $y_{i-1 } \alpha_i$ and $\alpha_i y_i$ have
	weight $L+2a_i$ ($i\in [n]$), the edge $\beta_i y_i$ has weight $L+3a_i$ ($i\in [n]$), 
	and the two parallel edges $y_n y$ have weight $2b$.

	There is a simple correspondence between power assignments $p(\cdot)$ that 
	give a feasible solution for \textsc{Minimum Installation Path}$(G',x,y,C)$
	and the reduction in radii for feasible solutions for \textsc{Minimum Barrier Shrinkage}$(\DD,x,y,C)$,
	as follows:
	\begin{itemize}
		\item the decrease in radius of $D_i$ corresponds to the power $p(y_i)$ ($i\in [n]$);
		\item the decrease in radius of $A_i$ corresponds to the power $p(\alpha_i)$ ($i\in [n]$);
		\item the decrease in radius of $B_i$ corresponds to the power $p(\beta_i)$ ($i\in [n]$);
		\item the decrease in radius of $D_0$ corresponds to the power $p(x)$; 
		\item we may assume that at most one of the disks $A_{n+1}$ and $B_{n+1}$
			is shrunken; the decrease in radius of $A_{n+1}$ or $B_{n+1}$, whichever is larger,
			corresponds to the power $p(y)$; 
		\item we may assume that all other disks are not shrunken.
	\end{itemize}
	This correspondence transforms feasible solutions for \textsc{Minimum Installation Path}$(G',x,y,C)$
	into feasible solutions for \textsc{Minimum Barrier
          Shrinkage}$(\DD,x,y,C)$, and conversely.  So the instances
	\textsc{Minimum Installation Path}$(G',x,y,C)$ and \textsc{Minimum
          Barrier Shrinkage}$(\DD,x,y,C)$ are equivalent.

	The second part of the lemma follows from the above correspondence
        and from Lemma~\ref{le:integral}.
\end{proof}

The disks $\DD$, as described, cannot be constructed in polynomial time in a Turing machine
because the centers of the disks do not have integer (or rational) coordinates.
More precisely, the centers of $A_i$ and $B_i$ ($i\in [n+1]$) are solutions to a system of equations
with degree-two polynomials. However, we can scale up the numbers involved in
the construction, and then round the non-integer numbers, to obtain
a polynomial-time construction, doable in a Turing machine:

\begin{theorem}
The \textsc{Minimum Barrier Shrinkage} problem is NP-hard.
\end{theorem}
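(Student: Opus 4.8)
The construction of $\DD$ behind Lemma~\ref{le:equivalence} cannot be carried out exactly on a Turing machine, because the centres of the disks $A_i$ and $B_i$ ($i\in[n+1]$) are intersection points of pairs of circles, hence algebraic of degree two and in general irrational. The plan is to run the reduction on a scaled and rounded copy of $\DD$. Fix an integer parameter $N$, to be chosen below as a sufficiently large polynomial in $n$ (it may be exponentially large, since $L$ is). Multiply every centre and every radius of $\DD$ by $N$; the scaled radii $N\lambda$, $3N\lambda$, $4N\lambda$ are integers because $\lambda=10C$ is. Then round each centre coordinate to the nearest integer, obtaining an all-integer family $\widetilde\DD$, and place the points $x$, $y$ (and the auxiliary points $y_i$) at integer coordinates in the appropriate cells. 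I claim that, for $C':=NC+P(n)$ with $P(n)$ the polynomial defined in the third paragraph, the instance $(\widetilde\DD,x,y,C')$ of \textsc{Minimum Barrier Shrinkage} has a positive answer if and only if the \textsc{Subset Sum} instance does.

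The quantity governing the combinatorial type of a shrinking is the penetration depth $\rho=r+r'-|c-c'|$ of a pair of disks: after shrinking by $\delta$ and $\delta'$ the disks still intersect if and only if $\delta+\delta'<\rho$. Scaling by $N$ multiplies every penetration depth by $N$, and rounding moves each centre by at most $1/\sqrt2$, hence changes every pairwise distance, and every penetration depth, by less than $2$. Two consequences follow. First, the ``large'' penetration depths used in Lemma~\ref{lem:barrier1} and in the discussion after it — those that are at least $\lambda$, respectively at least $3\lambda$ between consecutive disks of a corridor $\Pi_i$ — become at least $N\lambda-2$ in $\widetilde\DD$, which exceeds $C'$ once $N$ is large; so the same ``thick segments'' cannot be crossed within budget $C'$, and the reformulation of any feasible routing of the $x$-$y$ curve as an $s$-$t$ path in the graph $G'$ of Lemma~\ref{le:equivalence} goes through verbatim. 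Second, the remaining ``active'' penetration depths, which in the exact instance are precisely the $G'$-edge weights $L+2a_i$, $L+a_i$, $L+3a_i$ and $2b$, are replaced in $\widetilde\DD$ by values within $2$ of $N$ times those integers.

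It remains to bound how much the rounding can change the optimum. Fix an $s$-$t$ path $\pi$ in $G'$ with $m=O(n)$ edges of weights $w_1,\dots,w_m$. By Lemma~\ref{le:prolong}, writing $\varphi_0=0$ and $\varphi_k=\max\{0,w_k-\varphi_{k-1}\}$, we have $\opt(\pi)=\sum_{k=1}^m\varphi_k$. Since $x\mapsto\max\{0,w-x\}$ is $1$-Lipschitz, perturbing every $w_k$ by less than $2$ perturbs each $\varphi_k$ by less than $2k$, and hence $\opt(\pi)$ by less than $\sum_{k=1}^m 2k\le 2m^2=:P(n)=O(n^2)$. In the exact instance scaled by $N$, the minimum shrinking cost realizing the routing $\pi$ equals $N\cdot\opt_{G'}(\pi)$ (the greedy cost is positively homogeneous), so the minimum total shrinkage there is $N\cdot\opt_{G'}$, where $\opt_{G'}$ denotes the optimum of the corresponding integer-weighted \textsc{Minimum Installation Path} instance; by the Lipschitz bound, the minimum total shrinkage of $\widetilde\DD$ differs from $N\cdot\opt_{G'}$ by at most $P(n)$. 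Now choose any integer $N>2P(n)$. If \textsc{Subset Sum} has a positive answer, then $\opt_{G'}\le C$ by Lemma~\ref{le:equivalence}, so the optimum of $\widetilde\DD$ is at most $NC+P(n)=C'$, a positive answer. If \textsc{Subset Sum} has a negative answer, then $\opt_{G'}$, which is an integer by the argument of Lemma~\ref{le:integral}, exceeds $C$ and hence is at least $C+1$; thus the optimum of $\widetilde\DD$ is at least $N(C+1)-P(n)=NC+N-P(n)>NC+P(n)=C'$, a negative answer.

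Finally, $\widetilde\DD$ is computable in polynomial time: the centres of $D_i$, $D'_i$, $A'_i$, $B'_i$ are explicit integer multiples of $N\lambda$; each centre of $A_i$ or $B_i$ is the designated intersection point of two circles whose data are integers of polynomial bit-length, and so can be computed to within $1/2$ in polynomial time, for instance by approximating the square roots involved; the corridor disks $\Pi_i$ and the points $x$, $y$, $y_i$ are placed at integer coordinates directly; and every margin of the construction (separation between disjoint disks, gaps between a corridor and the rest) is a fixed fraction of $\lambda$, so it survives scaling by $N$ and rounding. All numbers have polynomial bit-length, consistent with weak NP-hardness. The crux of the argument is the third paragraph: one must guarantee that the per-edge rounding error, accumulated over the $\Theta(n)$ edges of a candidate curve, stays strictly below the ``integrality gap'' $N$ that scaling inherits from Lemma~\ref{le:integral}, and the Lipschitz estimate for the greedy cost provided by Lemma~\ref{le:prolong} is exactly what makes this possible, once $N$ is a large enough polynomial in~$n$.
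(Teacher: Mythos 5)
Your proof is correct, but its error analysis takes a different route from the paper's.  Both proofs handle the irrational centres of the $A_i,B_i$ the same way at the outset (round to a bounded precision, then scale to clear denominators), and both ultimately rely on the integrality of the induced \textsc{Minimum Installation Path} instance (Lemma~\ref{le:integral}) to conclude.  Where you differ is in bounding the effect of rounding.  You push the perturbation through the greedy recursion $\varphi_k=\max\{0,w_k-\varphi_{k-1}\}$ of Lemma~\ref{le:prolong}, using $1$-Lipschitzness to get $|\Delta\varphi_k|\le 2k$ and hence an additive error of $P(n)=O(n^2)$; you then scale by $N>2P(n)$ so that the integrality gap (which becomes $N$ after scaling) swamps the accumulated error.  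The paper instead argues directly on the geometry: displacing a disk by $\eps$ can be compensated by shrinking that same disk by an extra $\eps$, so rounding the $2(n+1)$ centres of $A_i,B_i$ to precision $\eps=1/(6(n+1))$ increases the optimal shrinkage by at most $1/3<1$.  A short ``sandwich'' ($(\DD,C)$ positive $\Rightarrow$ $(\hat\DD,C+1/3)$ positive $\Rightarrow$ $(\DD,C+2/3)$ positive $\Rightarrow$ $(\DD,C)$ positive, the last step by integrality) then shows $(\DD,C)$ and $(\hat\DD,C+1/3)$ are equivalent, after which scaling by $1/\eps$ gives an integer instance.  The paper's geometric bound is linear in $n$ rather than quadratic, so it can afford to use the integrality gap of $1$ without a preparatory scaling step; your version needs the larger scaling factor $N$ but is otherwise sound, and the conclusion — polynomial bit-length, weak NP-hardness — is the same.
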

\begin{proof}
  Consider an instance $(a_1,\dots,a_n,b)$ for \textsc{Subset Sum} and the
  associated instance $(\DD,x,y,C)$ for \textsc{Minimum Barrier Shrinkage}
  constructed above, with $C=nL+2\sum_{i\in [n]} a_i+b$.

  The centers of the disks in
  $\DD\setminus \{ A_1,\dots,A_{n+1},B_1,\dots, B_{n+1} \}$ are integers
  bounded by $O(\lambda)=O(nL)$.  For each $i\in [n+1]$, we compute the
  centers of the disks $A_i$ and $B_i$ up to a precision of at least
  $\eps= \frac{1}{6(n+1)}$. Thus, the coordinates of the centers are
  multiples of $\eps$.  Let $\hat A_i$ and $\hat B_i$ be the resulting
  disks; they have the same radius, $3\lambda$, but have been displaced by
  at most $\eps$ with respect to the original position in the construction.
  Let $\hat \DD$ be the set of disks obtained from $\DD$, where each
  $A_i,B_i$ are replaced with $\hat A_i,\hat B_i$ ($i\in [n+1]$).

  We consider instances of \textsc{Minimum Barrier Shrinkage}.  If the
  instance $(\DD,x,y,C)$ is positive, then the instance
  $(\hat\DD,x,y,C+1/3)$ is also positive (because each of the $2(n+1)$
  disks are moved by at most $\eps$, so the total displacement is at
  most~$1/3$), which implies that the instance $(\DD,x,y,C+2/3)$ is also
  positive (by the same argument), which in turn also implies that the
  instance $(\DD,x,y,C)$ is positive (by Lemma~\ref{le:equivalence}).  So,
  the instances $(\DD,x,y,C)$ and $(\hat\DD,x,y,C+1/3)$ are equivalent.
  
  Scaling all values in the construction of~$\hat\DD$ (coordinates and
  radii) by $1/\eps$, we get a construction where the disks have centers
  with integer coordinates, the radii are integers, and the whole
  construction can be constructed in polynomial time.
\end{proof}

Note that it is not clear whether the \textsc{Minimum Barrier Shrinkage} problem
belongs to NP. Indeed, if some triples of disks intersect,
a priori it seems that a solution may have to reduce the radius of
some disks by non-rational numbers, and decisions at different parts depend on each other,
which could increase the algebraic degree of the numbers telling how much to decrease the radii.

\paragraph{Remark}
A similar statement can be done for axis-parallel squares. For this we have to place the overlapping
squares in such a way that the overlap region, an axis-parallel rectangle, has width equal to the value 
we want to encode ($L+a_i$, $L+2a_i$, $L+3a_i$ or $2b$).
In such a case we do not run into the numerical issues with the centers because all the coordinates
can be taken directly to be integers.

\section*{Acknowledgments}

We would like to thank the reviewers for their careful reading of our
manuscript, their suggestions, and pointing out the possibility
to get the FPTAS discussed in Section~\ref{sec:installation_fptas}.  
This research was partially
supported by the Slovenian Research Agency (P1-0297, J1-8130, J1-8155).
Part of the research was done while Sergio was invited professor at
Universit\'e Paris-Est.


\end{document}